\newtheorem{proposition}{Proposition}[section]
{\theorembodyfont{\rmfamily}

\newtheorem{remark}{Remark}[section]
\newtheorem{example}{Example}[section]
}
\newenvironment{proof}[1][Proof]{\noindent\textbf{#1.} }{\newline \hspace*{\textwidth}\hspace*{-0,4cm} \rule{0.5em}{0.5em} \vspace{0,2cm}}
\begin{document}

\title{\textbf{An algorithm for computing geometric relative velocities through Fermi and observational coordinates}}


\author{Vicente J. Bol\'os
\\{\small Dpto. Matem\'aticas para la Econom\'{\i}a y la Empresa, Facultad de Econom\'{\i}a,}\\ {\small Universidad de Valencia. Avda. Tarongers s/n. 46022, Valencia, Spain.}\\ {\small e-mail\textup{: \texttt{vicente.bolos@uv.es}}}}


\maketitle

\begin{abstract}
We present a numerical method for computing the \textit{Fermi} and \textit{observational coordinates} of a distant test particle with respect to an observer. We apply this method for computing some previously introduced concepts of relative velocity: \textit{kinematic}, \textit{Fermi}, \textit{spectroscopic} and \textit{astrometric} relative velocities. We also extend these concepts to non-convex normal neighborhoods and we make some convergence tests, studying some fundamental examples in Schwarzschild and Kerr spacetimes. Finally, we show an alternative method for computing the Fermi and astrometric relative velocities.
\end{abstract}



\section{Introduction}

In Newtonian mechanics, the concept of ``relative velocity'' of a test particle with respect to an observer is unambiguous and fundamental. Nevertheless, in general relativity it is only well defined when the observer and the test particle are in the same event. In order to generalize this concept for distant test particles, a definition of relative velocity based on a particular coordinate system can be introduced, and it could be interesting in some cases, like the comoving coordinates in FLRW spacetimes for static observers (see \cite{Brae12}). Moreover, some notions of relative velocity of a distant test particle were introduced by the IAU using adapted reference systems in the case of objects in the neighborhood of the solar system (see \cite{Soff03,Lind03}). However, a universal concept of relative velocity independent from any coordinate system is basic and thereby, some authors have proposed geometric definitions without any coordinate-dependence (see \cite{Nar94,Bini95,Ca06}).
In this way, four different intrinsic geometric definitions of relative velocity of a distant test particle with respect to a single observer were introduced in \cite{Bol07}. These definitions are strongly associated with the concept of simultaneity: \textit{kinematic} and \textit{Fermi} in the framework of ``spacelike simultaneity'', \textit{spectroscopic} and \textit{astrometric} in the framework of ``lightlike simultaneity''.

These four concepts of relative velocity each have full physical sense, and have proved to be useful in the study and interpretation of properties of particular spacetimes (see \cite{Bol07,KC10,KR11,BK12}). For example, we can measure the expansion of space from the kinematic and Fermi relative velocities of comoving observers of FLRW spacetimes (see \cite{BHK12,Klein13}), or find the frequency shift and the light aberration effect (see \cite{Bol05}) from the spectroscopic relative velocity.

But, in most cases, the computations are analytically very complex and it makes the theoretical study much harder. These computations are strongly associated with the \textit{Fermi} and \textit{observational coordinates} and so, we present in this paper a numerical algorithm for finding these coordinates, allowing the computation of the geometric relative velocities.

This paper is organized as follows. In Section \ref{sec2} we present the framework, establishing the notation and defining some necessary concepts, introducing in Section \ref{sec2.1} the four geometric concepts of relative velocity and extending the original definitions to non-convex normal neighborhoods. In Section \ref{sec3} we develop the algorithm focused on computing the \textit{Fermi} and \textit{observational coordinates} of the test particle with respect to the observer by means of finding a geodesic (with certain characteristics) from the observer to the test particle, and we also make a discussion about the convergence of the method in Remark \ref{rem:conv}. In Section \ref{sec:examples} we give some fundamental examples in Schwarzschild and Kerr spacetimes, showing the rate of convergence and the effectiveness of the algorithm. Finally, in Appendix \ref{sec:alter} we present an alternative method for computing the Fermi and astrometric relative velocities.

\section{Definitions and notation}
\label{sec2}

We work in a Lorentzian spacetime manifold $\left( \mathcal{M},g\right) $, with $c=1$ and $\nabla $ the Levi-Civita connection, using the ``mostly plus'' signature convention $(-,+,+,+)$. Given two events $p$, $q$, and a segment curve $\psi $ that joins $p$ and $q$, the parallel transport from $p$ to $q$ along $\psi $ is denoted by $\tau ^{\psi } _{pq}$. Given a curve $\beta :I\rightarrow \mathcal{M}$ with $I\subseteq \mathbb{R}$, the image $\beta I$ (a subset in $\mathcal{M}$) is identified with $\beta $. Vector fields are denoted by uppercase letters and vectors (defined at a single point) are denoted by lowercase letters. If $u$ is a vector, then $u^{\bot }$ denotes the orthogonal space of $u$. The projection of a vector $v$ onto $u^{\bot }$ is the projection parallel to $u$, i.e. $v-\frac{g(u,v)}{g(u,u)}u$. Moreover, if $x$ is a spacelike vector, then $\Vert x\Vert :=g\left( x,x\right) ^{1/2}$ is the modulus of $x$. Given a vector field $X$, the unique vector of $X$ in $T_p\mathcal{M}$ is denoted by $X_p$.

In general, we say that a timelike world line $\beta $ is an \textit{observer} (or a \textit{test particle}); nevertheless, we say that a future-pointing timelike unit vector $u$ in $T_{p}\mathcal{M}$ is also an \textit{observer at $p$}, identifying the observer with its 4-velocity.

A \textit{light ray} is a lightlike (null) geodesic $\lambda $. A \textit{light ray from }$q$\textit{\ to }$p$ is a light ray $\lambda $ such that $q,p\in \lambda $ and $p$ is in the causal future of $q$.

We are going to consider two kinds of intrinsic simultaneity: \textit{spacelike} and \textit{lightlike}. Given an observer $u$ at $p$, the events simultaneous with $u$ form the corresponding \textit{simultaneity submanifold}:
\begin{itemize}
\item Spacelike simultaneity: the \textit{Fermi surface} $L_{p,u}$ (also known as \textit{Landau submanifold}) is given by all the geodesics starting from $p$ and orthogonal to $u$. In terms of the exponential map\footnote{Given $v\in T_p\mathcal{M}$, $\exp _p v:=\gamma _v(1)$ where $\gamma _v$ is the geodesic starting at $p$ with initial tangent vector $v$.} on $T_p\mathcal{M}$, it is given by $\exp _p u^{\bot}$.
\item Lightlike simultaneity: the \textit{past-pointing horismos submanifold} $E^-_p$ is given by all the light rays arriving at $p$, i.e. it is given by $\exp _p C_p^-$ where $C_p^-$ is the past-pointing light cone in $T_p\mathcal{M}$ composed by all the past-pointing lightlike vectors of $T_p\mathcal{M}$.
\end{itemize}

\subsection{Geometrically defined relative velocities}
\label{sec2.1}

Four different definitions of relative velocity of a test particle with respect to an observer were introduced in \cite{Bol07}, working in a convex normal neighborhood, where given two different events there exists a unique geodesic joining them. Now, we are going to work in a general spacetime $\mathcal{M}$, not necessarily a convex normal neighborhood, and we are going to extend the definitions of \cite{Bol07} to this new setting, where two events could be joined by more than one (or none) geodesic and the simultaneity submanifolds could present self-intersections.

Throughout the paper, we consider an observer $\beta $ and a test particle $\beta '$ (parameterized by their proper times) with 4-velocities $U$ and $U'$ respectively. Moreover, we  consider an event $p$ of $\beta $ with 4-velocity $u:=U_p$.

Given a vector $s\in u^{\bot }$ such that $\exp _p s\in \beta '$, the corresponding \textit{kinematic relative velocity of $\beta '$ with respect to $u$} is defined by the vector
\begin{equation}
\label{vkin}
v_{\mathrm{kin}}:=\frac{1}{-g\left( \tau ^{\psi }_{q_{\mathrm{s}}p}u'_{\mathrm{s}},u\right) }\tau ^{\psi }_{q_{\mathrm{s}}p}u'_{\mathrm{s}}-u,
\end{equation}
where $q_{\mathrm{s}}:=\exp _p s$ is the event of $\beta '$ at which the relative velocity is measured, $u'_{\mathrm{s}}:=U'_{q_{\mathrm{s}}}$ is the 4-velocity of $\beta '$ at $q_{\mathrm{s}}$, and $\tau ^{\psi }_{q_{\mathrm{s}}p}$ is the parallel transport from $q_{\mathrm{s}}$ to $p$ along the geodesic segment given by $\psi \left( \alpha \right) :=\exp _p \alpha s$ for $0\leq \alpha \leq 1$ (see Figure \ref{diagram2}, left). In this case, $s$ is a \textit{relative position of $\beta '$ with respect to $u$}, and there is a different $v_{\mathrm{kin}}$ for each different $s$ satisfying $s\in u^{\bot }$, $\exp _p s\in \beta '$. Note that if we work in a convex normal neighborhood then $s$ is unique.

\begin{figure}[tbp]
\begin{center}
\includegraphics[width=0.8\textwidth]{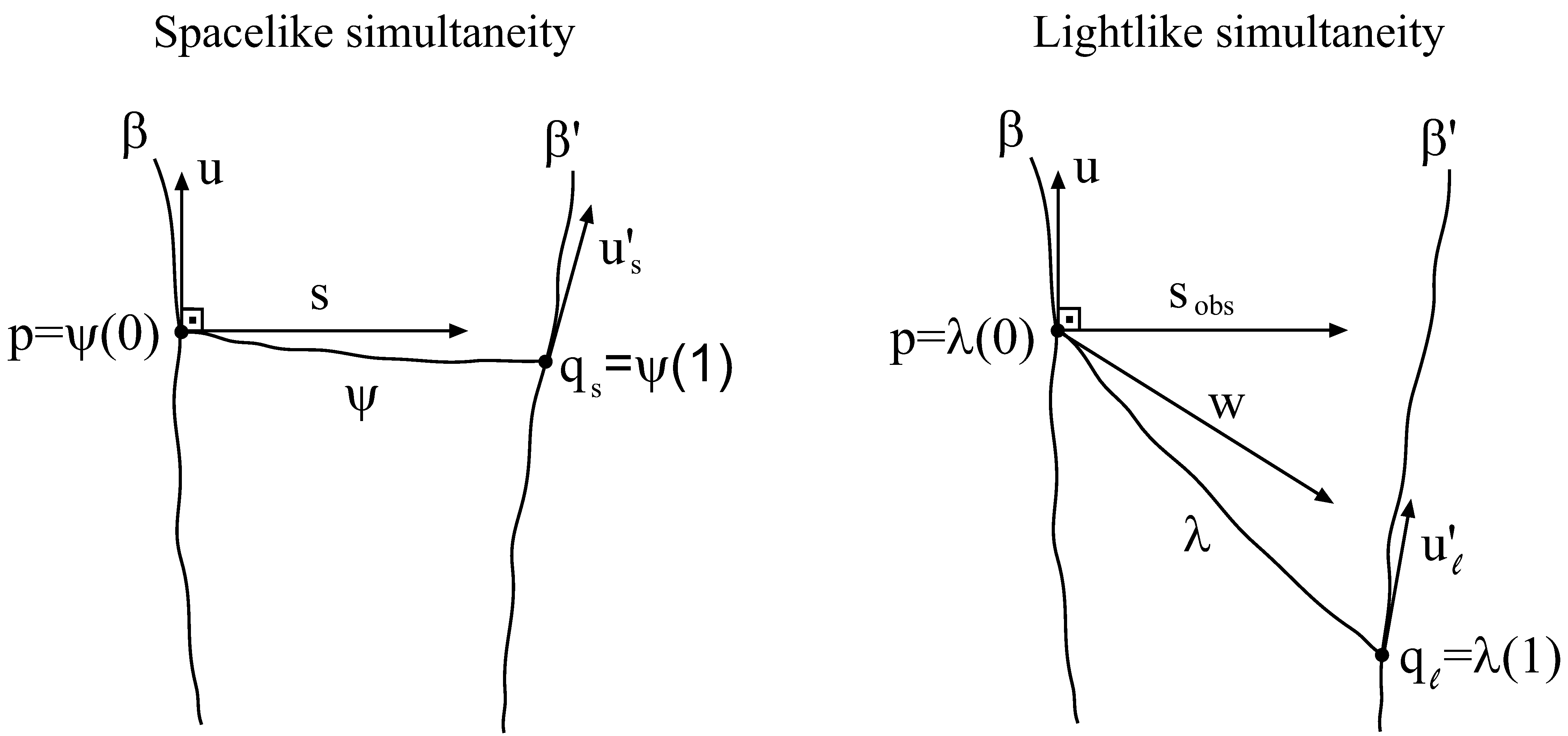}
\end{center}
\caption{Scheme of the elements involved in the study of relative velocities of $\beta '$ with respect to $u$ in general (not necessarily in a convex normal neighborhood). Left: given $s\in u^{\bot }$ such that $\exp _p s\in \beta '$, we define $\psi \left( \alpha \right) :=\exp _p \alpha s$ for $0\leq \alpha \leq 1$, $q_{\mathrm{s}}:=\psi (1)$, and $u'_{\mathrm{s}}$ is the 4-velocity of $\beta '$ at $q_{\mathrm{s}}$. Right: given $w\in C^-_p$ such that $\exp _p w\in \beta '$, we define $s_{\mathrm{obs}}:=w+g(u,w)u$, $\lambda \left( \alpha \right) :=\exp _p \alpha w$ for $0\leq \alpha \leq 1$, $q_{\ell}:=\lambda (1)$, and $u'_{\ell}$ is the 4-velocity of $\beta '$ at $q_{\ell}$.} \label{diagram2}
\end{figure}

Analogously, we can define another concept of relative velocity also introduced in \cite{Nar94}: given a vector $w\in C^-_p$ such that $\exp _p w\in \beta '$, the corresponding \textit{spectroscopic relative velocity of $\beta '$ with respect to $u$} is defined by the vector
\begin{equation}
\label{vspec}
v_{\mathrm{spec}}:=\frac{1}{-g\left( \tau ^{\lambda }_{q_{\ell}p}u'_{\ell},u\right) }\tau ^{\lambda }_{q_{\ell}p}u'_{\ell}-u.
\end{equation}
where $q_{\ell}:=\exp _p w$ is the event of $\beta '$ at which the relative velocity is measured, $u'_{\ell}:=U'_{q_{\ell}}$ is the 4-velocity of $\beta '$ at $q_{\ell}$, and $\tau ^{\lambda }_{q_{\ell}p}$ is the parallel transport from $q_{\ell}$ to $p$ along the light ray segment given by $\lambda \left( \alpha \right) :=\exp _p \alpha w$ for $0\leq \alpha \leq 1$ (see Figure \ref{diagram2}, right). In this case, the projection of $w$ onto $u^{\bot }$ given by $s_{\mathrm{obs}}:=w+g(u,w)u$ is the corresponding \textit{observed relative position of $\beta '$ with respect to $u$}, and there is a one-to-one correspondence between $w$ and $s_{\mathrm{obs}}$ (note that $w=s_{\mathrm{obs}}-\| s_{\mathrm{obs}}\| u$). So, there is a different $v_{\mathrm{spec}}$ for each different $w$ satisfying $w\in C^-_p$, $\exp _p w\in \beta '$, and this means that if there is gravitational lensing then each image of the observed object has a different spectroscopic relative velocity.

\begin{remark}
The spectroscopic relative velocity is specially useful because the frequency shift can be deduced from it (see \cite{Bol07}): let $\lambda $ be a light ray from $q_{\ell}$ to $p$ and let $u$, $u'_{\ell}$ be two observers at $p$, $q_{\ell}$ respectively; then
\begin{equation}
\label{dopplergen}
\frac{\nu '}{\nu }=\frac{1}{\sqrt{1-\|v_{\mathrm{spec}}\| ^2}}\left(1+g\left( v_{\mathrm{spec}},\frac{s_{\mathrm{obs}}}{\| s_{\mathrm{obs}}\|}\right) \right) ,
\end{equation}
where $\nu $, $\nu '$ are the frequencies of $\lambda $ observed by $u$, $u'_{\ell}$ respectively, $v_{\mathrm{spec}}$ is the corresponding spectroscopic relative velocity given by \eqref{vspec}, and $s_{\mathrm{obs}}$ is the corresponding observed relative position of $u'_{\ell}$ with respect to $u$.
\end{remark}

In general, the existence of $s$ or $w$ (and consequently $s_{\mathrm{obs}}$) is not assured. Nevertheless, if they exist for each event of the observer $\beta $, we can construct (differentiable) vector fields $S$ and $S_{\mathrm{obs}}$ defined on $\beta $, representing a \textit{relative position} and an \textit{observed relative position of $\beta '$ with respect to $\beta $}, respectively. But, of course, not all choices of $s$ and $w$ lead to differentiable $S$ and $S_{\mathrm{obs}}$, only those that change differentiably when varying the event of the observer. From $S$ and $S_{\mathrm{obs}}$, we can construct the vector fields $V_{\mathrm{kin}}$ and $V_{\mathrm{spec}}$ defined on $\beta $, representing a \textit{kinematic} and a \textit{spectroscopic relative velocity of $\beta '$ with respect to $\beta $}, respectively.

Given a vector field $S$ defined on $\beta $ and representing a relative position of $\beta '$ with respect to $\beta $ (i.e. such that $S_p\in U_p^{\bot }$ and $\exp _p S_p \in \beta '$ for all $p\in \beta $), the corresponding \textit{Fermi relative velocity of $\beta '$ with respect to $\beta $} is the vector field
\begin{equation}
\label{vfermi}
V_{\mathrm{Fermi}}:=\nabla _{U}S+g\left( \nabla _{U}S,U\right) U=\nabla _{U}S-g\left( S,\nabla _{U}U\right) U,
\end{equation}
defined on $\beta $.

Analogously, given a vector field $S_{\mathrm{obs}}$ defined on $\beta $ and representing an observed relative position of $\beta '$ with respect to $\beta $ (i.e. such that $S_{\mathrm{obs}~p}\in U_p^{\bot }$ and $\exp _p ( S_{\mathrm{obs}~p}-\| S_{\mathrm{obs}~p}\| U_p) \in \beta '$ for all $p\in \beta $), the corresponding \textit{astrometric relative velocity of $\beta '$ with respect to $\beta $} is the vector field
\begin{equation}
\label{vast}
V_{\mathrm{ast}}:=\nabla _{U}S_{\mathrm{obs}}+g\left( \nabla _{U}S_{\mathrm{obs}},U\right) U=\nabla _{U}S_{\mathrm{obs}}-g\left( S_{\mathrm{obs}},\nabla _{U}U\right) U,
\end{equation}
defined on $\beta $.

If we work in a convex normal neighborhood, then it is assured that there exists a unique $S$ and $S_{\mathrm{obs}}$, and hence there exists a unique $V_{\mathrm{kin}}$, $V_{\mathrm{spec}}$, $V_{\mathrm{Fermi}}$, and $V_{\mathrm{ast}}$ (see \cite{Bol07}). But this is not true in general and so different vector fields $S$ and $S_{\mathrm{obs}}$ define different vector fields $V_{\mathrm{kin}}$, $V_{\mathrm{spec}}$, $V_{\mathrm{Fermi}}$, and $V_{\mathrm{ast}}$.

In order to complete the notation that we are going to use, we define the vectors $v_{\mathrm{Fermi}}:=V_{\mathrm{Fermi}\, p}$ and $v_{\mathrm{ast}}:=V_{\mathrm{ast}\, p}$; moreover, throughout the paper we are going to denote $s:=S_p$, $s_{\mathrm{obs}}:=S_{\mathrm{obs}\, p}$, $v_{\mathrm{kin}}:=V_{\mathrm{kin}\, p}$, and $v_{\mathrm{spec}}:=V_{\mathrm{spec}\, p}$ as we have already done in this section.

\section{The algorithm}
\label{sec3}

First, we are going to suppose that we work in a convex normal neighborhood, and later, in Section \ref{nonconvex} we will extend the discussion to non-convex normal neighborhoods. So, working in a convex normal neighborhood implies that given an event $p\in \beta $ with 4-velocity $u$, there exists a unique event $q_{\mathrm{s}}\in \beta '$ such that the unique geodesic $\psi $ that joins $p$ and $q_{\mathrm{s}}$ is in $L_{p,u}$ (i.e. it is orthogonal to $u$ at $p$); on the other hand, there exists a unique event $q_{\ell}\in \beta '$ such that the unique geodesic $\lambda $ that joins $p$ and $q_{\ell}$ is in $E_p^-$ (i.e. it is a light ray arriving at $p$). In this case, the main difficulty is to find the geodesics $\psi $ (spacelike simultaneity) and $\lambda $ (lightlike simultaneity), taking into account that the events $q_\mathrm{s}$ and $q_{\ell}$ are also unknown (see Figure \ref{diagram2}). This is equivalent to finding the relative positions $s$ and $s_{\mathrm{obs}}$, i.e. the \textit{Fermi} and \textit{observational} (or \textit{optical}) coordinates respectively (see \cite{Ferm22,Walker35,Mana63,Elli80,Elli85}). Once we have found them, the corresponding relative velocities are easy to compute by means of their definitions (see Section \ref{sec2.1}).

If we can not find theoretically the Fermi and observational coordinates, then we can not find the geodesics $\psi $ and $\lambda $ of Figure \ref{diagram2}. Of course, there is a \textit{brute-force} method for finding these geodesics that consists on launching a lot of geodesics from the observer at $p$ in different directions (orthogonal to $u$ in the spacelike case, and past-pointing lightlike directions in the lightlike case), trying to reach the test particle $\beta '$. But obviously, this method spends a lot of computation time, and solving the geodesic equations with an acceptable accuracy is not as fast as we desire in some metrics.

So, we are going to propose a more efficient method based on an iterative correction algorithm with a Newton-Raphson structure. But first, we need to make more precise the concept of ``nearness'' between curves.

\subsection{The concept of ``nearness''}
\label{sec:nearness}

There is no global concept of distance in pseudo-Riemannian manifolds because the metric $g$ is degenerate. Nevertheless, given a 3-dimensional spacelike foliation in $\mathcal{M}$ we have that the induced metric $\overline{g}$ is Riemannian. Hence, we are going to suppose that we use a coordinate system $\left\{ x^0,x^1,x^2,x^3\right\} $ such that $x^1,x^2,x^3$ are spacelike coordinates and $x^0$ (also denoted as $t$) is a timelike coordinate, referred as \textit{coordinate time}; then, $t=\mathrm{constant}$ defines the leaves of the desired spacelike foliation, and the covariant coefficients of the induced Riemannian metric $\overline{g}$ are given by $\overline{g}_{ij}=g_{ij}$, where $g_{\mu \nu}$ are the covariant coefficients of the general metric $g$ in the above coordinate system (Latin indices run over $1,2,3$, and Greek indices run over $0,1,2,3$). Namely, given two vectors $v,w$ in the same tangent space of a leaf, we have that $\overline{g}(v,w)=g_{ij}v^i w^j$, and $\Vert v\Vert = \overline{g}(v,v)^{1/2}$.

Therefore, given any event $p$ in a leaf, there exists a local concept of spatial distance for events $q$ in the same leaf: $\textrm{d}(p,q):=\Vert v\Vert $, where $v$ is the vector which ``joins'' $p$ and $q$, i.e. such that $\overline{\exp }_p v=q$ where $\overline{\exp }_p$ is the induced exponential map at $p$. If $\textrm{d}(p,q)\approx 0$, the tangent spaces at $p$ and $q$ can be identified by means of the coordinate system, and we can consider an affine structure\footnote{Given an affine structure, we can subtract points to get vectors, or add a vector to a point to get another point.} around $p$, having $\overline{\exp }_p v\approx p+v$, i.e. $v^i \approx q^i-p^i$. So, we will say that two events $p,q$ with the same coordinate time are \textit{close} if $\Vert q-p\Vert $ is considered to be small, where $q-p$ is the vector in the tangent space of $p$ with coordinates $q^i-p^i$.

\begin{remark}
Note that the geodesics in the leaves $t=\mathrm{constant}$ are not the same as in the original manifold (unless the leaves submanifolds were totally geodesic), and so it is not assured the existence and uniqueness of the previous vector $v$ in general. Nevertheless, in this section we work in a convex normal neighborhood, i.e. the exponential map is a diffeomorphism; hence, if the leaves $t=\mathrm{constant}$ are regular submanifolds (i.e. the vector field $\frac{\partial}{\partial t}$ is synchronizable) then the induced exponential maps are also diffeomorphisms and so the leaves are convex normal neighborhoods. In conclusion, we have to add the assumption that $\frac{\partial}{\partial t}$ is synchronizable.
\end{remark}

This concept of nearness can be also applied to curves: we will say that two curves $c$, $c'$ are \textit{close} if there exist two events $p\in c$ and $q\in c'$ with the same coordinate time such that $p$ and $q$ are close (i.e. $\Vert q-p\Vert $ is small).

\subsection{Spacelike simultaneity}
\label{sec:3.2}

Let $\left\{ t\equiv x^0,x^1,x^2,x^3\right\} $ be a coordinate system such that $x^1,x^2,x^3$ are spacelike coordinates and $t$ is a timelike coordinate.
In the framework of spacelike simultaneity and taking into account the previous concept of nearness, we choose an initial vector $s_0\in u^{\bot }$, such that the geodesic $\psi _0$ starting from $p$ with initial tangent vector $s_0$ is sufficiently close to the test particle $\beta '$; i.e. there exist events $q_{\mathrm{geo}}$ (in the geodesic) and $q_{\mathrm{part}}$ (in the test particle), both with the same coordinate time, such that $\Vert q_{\mathrm{part}}-q_{\mathrm{geo}}\Vert $ is considered to be small.

\begin{remark}
\label{rem3.1}
In practice, we need a sub-algorithm for estimating $q_{\mathrm{geo}}$ and $q_{\mathrm{part}}$: given the initial geodesic $\psi _0$, we can compute the spatial distance between events of $\psi _0$ and the corresponding events of $\beta '$ with the same coordinate time; supposing that $q_{\mathrm{geo}}$ (in $\psi _0$) and $q_{\mathrm{part}}$ (in $\beta '$) are the events that minimize this distance, we can use a bisection method for estimating them with low computational cost. Nevertheless, the accuracy at this stage is not too much important because we only need two events sufficiently close.
\end{remark}

\begin{figure}[tbp]
\begin{center}
\includegraphics[width=0.75\textwidth]{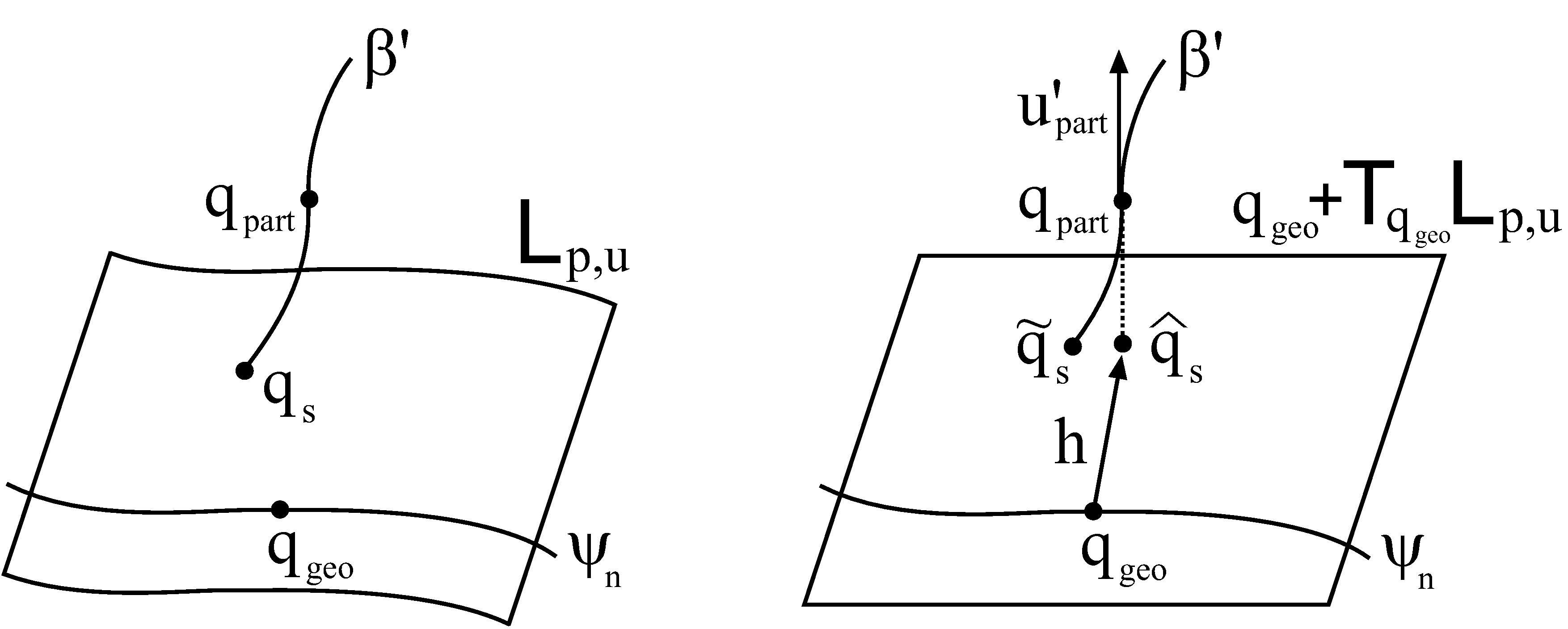}
\end{center}
\caption{Diagrams for $q_{\mathrm{s}}$ (left) and the estimations of $q_{\mathrm{s}}$ (right) when $q_{\mathrm{geo}}$ and $q_{\mathrm{part}}$ (that are events with the same coordinate time) are close. In this case, all the tangent spaces are identified by means of the coordinate system and provide an affine structure around $q_{\mathrm{geo}}$. Right: the Fermi surface $L_{p,u}$ is approximated by the affine hyperplane $q_{\mathrm{geo}}+T_{q_{\mathrm{geo}}}L_{p,u}$.} \label{vdr-sp}
\end{figure}

Since we are going to work near the event $q_{\mathrm{geo}}$, we can identify all the tangent spaces by means of the coordinate system and provide an affine structure in the vicinity of $q_{\mathrm{geo}}$, where $\exp _{q_{\mathrm{geo}}} v\approx q_{\mathrm{geo}}+v$. Hence, the Fermi surface $L_{p,u}$ can be linearly approximated by the affine hyperplane $q_{\mathrm{geo}}+T_{q_{\mathrm{geo}}}L_{p,u}$ and the intersection event $\widetilde{q}_{\mathrm{s}}$ between $\beta '$ and $q_{\mathrm{geo}}+T_{q_{\mathrm{geo}}}L_{p,u}$ is approximated by
\begin{equation}
\label{eq:qsapprox}
\widehat{q}_{\mathrm{s}}:= q_{\mathrm{geo}}+h,
\end{equation}
where $h$ is the projection of  $q_{\mathrm{part}}-q_{\mathrm{geo}}$ onto $T_{q_{\mathrm{geo}}}L_{p,u}$ parallel to $u'_{\mathrm{part}}$, with $u'_{\mathrm{part}}$ the $4$-velocity of $\beta '$ at $q_{\mathrm{part}}$ (see Figure \ref{vdr-sp} with $n=0$).

For finding $h$ we need first to find $T_{q_{\mathrm{geo}}}L_{p,u}$, and for this purpose we can use the Jacobian matrix $J^{\mu}_{\nu}:=\partial _{\nu} \exp _p ^{\mu} s_0$ where $\partial _{\nu}$ is the partial derivative with respect to the $\nu$-th coordinate (note that the derivatives of $\exp _p$ are easy to estimate numerically). So, $\left\{ \bar{e}_1,\bar{e}_2,\bar{e}_3\right\} $ is a basis of $T_{q_{\mathrm{geo}}}L_{p,u}$, where $\bar{e}_i^{\mu}:=J^{\mu}_{\nu}e^{\nu}_i$ and $\left\{ e_1,e_2,e_3\right\} $ is a basis of $u^{\bot}$, e.g. the projections onto $u^{\bot}$ of the spacelike vectors $\left. \frac{\partial}{\partial x^1}\right| _p\equiv (0,1,0,0)$, $\left. \frac{\partial}{\partial x^2}\right| _p\equiv (0,0,1,0)$ and $\left. \frac{\partial}{\partial x^3}\right| _p\equiv (0,0,0,1)$. Then, supposing that the Fermi surface $L_{p,u}$ is spacelike at $q_{\mathrm{geo}}$, we have that $\left\{ \bar{e}_1,\bar{e}_2,\bar{e}_3,u'_{\mathrm{part}}\right\} $ is a basis of $T_{q_{\mathrm{geo}}}\mathcal{M}$, and hence $h=\bar{\alpha}^i\bar{e}_i$, where $\bar{\alpha}^i$ are the spatial coordinates of $q_{\mathrm{part}}-q_{\mathrm{geo}}$ in the above basis.

\begin{remark}
Given $q\in L_{p,u}$, in \cite[Proposition 3]{Bol02} it is proved that, in some cases, $T_q L_{p,u}=\left( \tau _{pq}u\right) ^{\bot}$ and consequently, $L_{p,u}$ is spacelike at $q$; for example, in the case of \textit{stationary observers} in the Schwarzschild spacetime (see Examples \ref{example1} and \ref{example3}). Nevertheless, $T_q L_{p,u}$ is not $\left( \tau _{pq}u\right) ^{\bot}$ in general, as it occurs in the analogous case of \textit{stationary observers} in the Kerr spacetime (see Examples \ref{example2} and \ref{example4}), although it can be checked that the Fermi surface $L_{p,u}$ remains spacelike.
\end{remark}

The next step is finding $\widetilde{q}_{\mathrm{s}}$ by means of a Newton-Raphson method: we find $\widehat{q}_{\mathrm{s}}$ by \eqref{eq:qsapprox}, and we redefine $q_{\mathrm{part}}$ as the event of $\beta '$ with the same coordinate time as $\widehat{q}_{\mathrm{s}}$; then, by \eqref{eq:qsapprox} again, we find another $\widehat{q}_{\mathrm{s}}$, repeating this process until $\widehat{q}_{\mathrm{s}}$ approximates $\widetilde{q}_{\mathrm{s}}$ with the desired accuracy. The convergence of this method is assured by the next proposition.

\begin{proposition}
\label{prop:conv1}
If the Fermi surface $L_{p,u}$ is spacelike at $q_{\mathrm{geo}}$ and the acceleration of the test particle $\beta '$ is bounded, then the Newton-Raphson method described above produces a sequence of events $\left\{ \widehat{q}_{\mathrm{s}\,n}\right\} _{n\in \mathbb{N}}$ that converges to $\widetilde{q}_{\mathrm{s}}$ with quadratic order for a sufficiently close events $q_{\mathrm{part}}$ and $q_{\mathrm{geo}}$.
\end{proposition}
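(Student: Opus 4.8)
The plan is to reduce the statement to the textbook quadratic–convergence theorem for Newton's method, by recognizing that, once we pass to the coordinate picture in the vicinity of $q_{\mathrm{geo}}$ (where all tangent spaces are identified and the affine structure is exact), the loop described above \emph{is} the classical Newton--Raphson iteration for a scalar function. To set it up, parametrize $\beta '$ by coordinate time and write $\beta '(t)$ for the event of $\beta '$ with coordinate time $t$, a dot denoting $\frac{\mathrm{d}}{\mathrm{d}t}$; then $\dot\beta '(t)$ has $x^0$--component equal to $1$ and is a positive multiple of $u'_{\mathrm{part}}$ when $q_{\mathrm{part}}=\beta '(t)$. Let $\theta$ be a coordinate covector with $\ker\theta =T_{q_{\mathrm{geo}}}L_{p,u}$, so that the affine hyperplane of \eqref{eq:qsapprox} is $\{x:\theta (x-q_{\mathrm{geo}})=0\}$ and $\widetilde q_{\mathrm{s}}=\beta '(\widetilde t_{\mathrm{s}})$, where $\widetilde t_{\mathrm{s}}$ is a root (unique near $q_{\mathrm{geo}}$, by the transversality noted below) of
\[
F(t):=\theta\bigl(\beta '(t)-q_{\mathrm{geo}}\bigr).
\]

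First I would unwind \eqref{eq:qsapprox}. With $q_{\mathrm{part}}=\beta '(t_n)$, the vector $h$ comes from writing $\beta '(t_n)-q_{\mathrm{geo}}=h+\sigma _n u'_{\mathrm{part}}$ with $h\in\ker\theta$, so that $\widehat q_{\mathrm{s}}=q_{\mathrm{geo}}+h=\beta '(t_n)-\sigma _n u'_{\mathrm{part}}$; imposing $\theta (\widehat q_{\mathrm{s}}-q_{\mathrm{geo}})=0$ and cancelling the positive proportionality factor between $u'_{\mathrm{part}}$ and $\dot\beta '(t_n)$ gives, since $F'(t)=\theta(\dot\beta '(t))$,
\[
\widehat q_{\mathrm{s}}=\beta '(t_n)-\frac{F(t_n)}{F'(t_n)}\,\dot\beta '(t_n).
\]
Reading off the $x^0$--component and using that $\dot\beta '$ has $x^0$--component $1$, the coordinate time of $\widehat q_{\mathrm{s}}$ is $t_{n+1}=t_n-F(t_n)/F'(t_n)$, so the redefined $q_{\mathrm{part}}$ is $\beta '(t_{n+1})$ and the loop is \emph{exactly} Newton--Raphson for $F$. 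I would stress that no approximation is hidden here: the only linearization is the replacement of $L_{p,u}$ by its tangent hyperplane at $q_{\mathrm{geo}}$, and $\widetilde q_{\mathrm{s}}$ is by definition the intersection of $\beta '$ with \emph{that} hyperplane, so $\{\widehat q_{\mathrm{s}\,n}\}$ genuinely converges to $\widetilde q_{\mathrm{s}}$, not merely to a nearby point.

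Next I would verify the hypotheses of the standard convergence theorem for Newton's method applied to $F$ on a neighbourhood of $\widetilde t_{\mathrm{s}}$. (i) \emph{Simple root:} $F'(\widetilde t_{\mathrm{s}})=\theta(\dot\beta '(\widetilde t_{\mathrm{s}}))\neq 0$, because $\dot\beta '(\widetilde t_{\mathrm{s}})$ is timelike (a multiple of the $4$--velocity) while $\ker\theta=T_{q_{\mathrm{geo}}}L_{p,u}$ is spacelike by hypothesis; this is precisely the transversality of the timelike world line $\beta '$ to the spacelike hyperplane, and it also makes the projection defining $h$ well defined at every step. (ii) \emph{Regularity:} $F''(t)=\theta(\ddot\beta '(t))$, and boundedness of the $4$--acceleration $\nabla _{U'}U'$ on the relevant compact coordinate region, where the metric coefficients and their first derivatives are bounded, translates into boundedness of the coordinate acceleration $\ddot\beta '$, hence of $F''$. (iii) \emph{Basin of attraction:} since $q_{\mathrm{geo}}$ lies on the hyperplane and $q_{\mathrm{part}}=\beta '(t_0)$ is within coordinate distance $\Vert q_{\mathrm{part}}-q_{\mathrm{geo}}\Vert$ of it, while $|F'|$ is bounded below near $\widetilde t_{\mathrm{s}}$ by (i), the mean value theorem gives $|t_0-\widetilde t_{\mathrm{s}}|=O(\Vert q_{\mathrm{part}}-q_{\mathrm{geo}}\Vert)$, so taking $q_{\mathrm{part}}$ and $q_{\mathrm{geo}}$ sufficiently close puts $t_0$ inside the ball around $\widetilde t_{\mathrm{s}}$ on which Newton's iteration converges quadratically. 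The usual estimate then yields $|t_{n+1}-\widetilde t_{\mathrm{s}}|\leq \tfrac{\sup |F''|}{2\inf |F'|}\,|t_n-\widetilde t_{\mathrm{s}}|^2$, i.e. quadratic convergence of $\{t_n\}$; and since the coordinate chart is a diffeomorphism on the region considered, the events $\widehat q_{\mathrm{s}\,n}=\beta '(t_n)$ converge quadratically to $\widetilde q_{\mathrm{s}}$.

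The hard part is not the analysis — once $F$, $\theta$ and the compact region controlling the metric are fixed, (i)--(iii) are the textbook Newton's-method bookkeeping — but the clean identification in the second paragraph: one must check carefully that ``project $q_{\mathrm{part}}-q_{\mathrm{geo}}$ along $u'_{\mathrm{part}}$, re-read the coordinate time, and slide along $\beta '$ to that coordinate time'' reproduces the Newton step for $F$ exactly, with the unit $x^0$--component of $\dot\beta '$ supplying the normalizing denominator. It is also worth being explicit that $q_{\mathrm{geo}}$ — and hence the hyperplane and the covector $\theta$ — stays fixed throughout the loop, only $q_{\mathrm{part}}$ and $u'_{\mathrm{part}}$ varying, so that the spacelikeness hypothesis needs to be checked only at the single point $q_{\mathrm{geo}}$.
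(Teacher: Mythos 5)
Your proof is correct and is essentially the paper's own argument made explicit: with $\theta=\mathrm{d}t+a_i\,\mathrm{d}x^i$ the paper's formula \eqref{eq:prop1} for $t_{n+1}$ is exactly your Newton step $t_n-F(t_n)/F'(t_n)$ for $F(t)=\theta\bigl(\beta'(t)-q_{\mathrm{geo}}\bigr)$, and its Taylor-remainder computation of $\epsilon_{n+1}$ is precisely the standard quadratic-convergence estimate you invoke, with the same two hypotheses (spacelikeness of $T_{q_{\mathrm{geo}}}L_{p,u}$ giving $F'\neq 0$, bounded acceleration giving bounded $F''$). One cosmetic slip at the end: $\widehat{q}_{\mathrm{s}\,n}$ lies on the hyperplane with coordinate time $t_{n+1}$ rather than being $\beta'(t_n)$, but since $t_n\to\widetilde{t}_{\mathrm{s}}$ quadratically both sequences converge to $\widetilde{q}_{\mathrm{s}}$ at that rate, so the conclusion is unaffected.
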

\begin{proof}
We are going to denote $\widehat{q}_{\mathrm{s}\,n}$ as $q_n$ for the sake of simplicity. Working in coordinates $\left\{ t,x^1,x^2,x^3\right\} $, let $\beta '$ be parameterized by the coordinate time $t$ and let $t_0:= q_{\mathrm{part}}^t$, $t_n:=q_{n-1}^t$ for $n\geq 1$.

Given $n\in \mathbb{N}$, the event $q_n$ is the intersection of the affine hyperplane $q_{\mathrm{geo}}+T_{q_{\mathrm{geo}}}L_{p,u}$ and the affine line $\beta '(t_n)+\langle \dot{\beta }'(t_n)\rangle $, where the overdot denotes derivation with respect to $t$ and ``$\langle \,\,\rangle$'' denotes the span. If $t+a_i x^i=b$ is the cartesian equation of $q_{\mathrm{geo}}+T_{q_{\mathrm{geo}}}L_{p,u}$ (we can suppose that the coefficient of $t$ is $1$ because $L_{p,u}$ is not timelike), then, by algebraic manipulations, we have that the coordinate time of $q_n$ is given by
\begin{equation}
\label{eq:prop1}
t_{n+1}=\frac{a_i \left( t_n \dot{x}^i_n-x^i_n\right) +b}{1+a_i \dot{x}^i_n},
\end{equation}
where $x^i_n:=\beta '^i(t_n)$ and $\dot{x}^i_n:=\dot{\beta }'^i(t_n)$ for $i=1,2,3$. It can be proved that $1+a_i \dot{x}^i_n\neq 0$ because $\beta '$ is timelike and $L_{p,u}$ is spacelike at $q_{geo}$, and so the intersection event $q_n$ always exists.

On the other hand, if $\widetilde{q}_{\mathrm{s}}=\beta '(\tilde{t})$, from the Taylor expansion of order $1$ of $\beta '^i(t)$ at $t_n$ we have that
\begin{equation}
\label{eq:prop2}
\widetilde{q}_{\mathrm{s}}\,\!\!^i=x^i_n+\dot{x}^i_n \epsilon _n + R^i_1,
\end{equation}
where $\epsilon _n:=\tilde{t}-t_n$ and $R^i_1:=\frac{1}{2}\ddot{\beta }'^i(\xi ^i_n)\epsilon _n^2$ with $\xi ^i_n$ between $\tilde{t}$ and $t_n$ for $i=1,2,3$. Moreover, since $\widetilde{q}_{\mathrm{s}}\in q_{\mathrm{geo}}+T_{q_{\mathrm{geo}}}L_{p,u}$, we have that
\begin{equation}
\label{eq:prop3}
\tilde{t}+a_i \widetilde{q}_{\mathrm{s}}\,\!\!^i=b.
\end{equation}
Substituting \eqref{eq:prop2} in \eqref{eq:prop3} and taking into account \eqref{eq:prop1}, we get
\[
\epsilon _{n+1}=-\frac{a_i R^i_1}{1+a_i \dot{x}^i_n}=-\frac{a_i \ddot{\beta }'^i(\xi ^i_n)}{2\left( 1+a_i \dot{x}^i_n\right) }\epsilon _n^2.
\]
Assuming that the acceleration of $\beta '$ is bounded and $|\epsilon _n|$ is sufficiently small (this can be achieved if $q_{\mathrm{part}}$ and $q_{\mathrm{geo}}$ are sufficiently close), the result holds.
\end{proof}

Once we obtain $\widetilde{q}_{\mathrm{s}}$, we have to estimate the vector $\widetilde{s}$ such that the geodesic starting from $p$ with initial tangent vector $\widetilde{s}$ arrives at $\widetilde{q}_{\mathrm{s}}$, i.e. $\widetilde{s}:=\exp_p^{-1}\widetilde{q}_{\mathrm{s}}$. Note that $\widetilde{s}$ might not be exactly orthogonal to $u$ because $q_{\mathrm{geo}}+T_{q_{\mathrm{geo}}}L_{p,u}$ does not coincide in general with $L_{p,u}$ (i.e. $\widetilde{q}_{\mathrm{s}}$ is an estimation of $q_{\mathrm{s}}$ and they do not coincide in general). Re-scaling $s_0$ in order to verify $\exp_p s_0=q_{\mathrm{geo}}$ (for convenience) and working in coordinates, we can make the linear estimation
\begin{equation}
\label{eq:qsi}
\widetilde{q}_{\mathrm{s}}^{\,\mu}\approx q_{\mathrm{geo}}^{\mu}+J_{\nu}^{\mu}\cdot (\widetilde{s}^{\,\nu}-s_0^{\nu}),
\end{equation}
where $J^{\mu}_{\nu}=\partial _{\nu} \exp _p ^{\mu} s_0$ was previously computed. Hence, solving the linear system given by \eqref{eq:qsi} we have
\begin{equation}
\label{eq:v1s}
s_1^{\nu}:=s_0^{\nu}+ (J^{-1})_{\mu}^{\nu}\cdot \left( \widetilde{q}_{\mathrm{s}}^{\, \mu}-q_{\mathrm{geo}}^{\mu}\right) \approx \widetilde{s} ^{\, \nu} ,
\end{equation}
where $(J^{-1})_{\mu}^{\nu}$ are the coefficients of the inverse of the Jacobian matrix $J_{\nu}^{\mu}$.

Finally, since $s_1$ might not be exactly orthogonal to $u$ (as it happens with $\widetilde{s}$), we have to redefine $s_1$ projecting it onto $u^{\bot}$, obtaining in this way an estimation of the desired vector $s$.

\begin{remark}
\label{rem:soa}
A higher order approximation based on the Taylor expansion of $\exp _p$ can be applied instead of the linear estimation \eqref{eq:qsi}. For example, the following quadratic approximation
\begin{equation}
\label{eq:qsi2}
\widetilde{q}_{\mathrm{s}}^{\,\mu}\approx q_{\mathrm{geo}}^{\mu}+\partial _{\nu} \exp _p ^{\mu} s_0\cdot (\widetilde{s}^{\,\nu}-s_0^{\nu})+\frac{1}{2}\partial _{\nu \alpha}\exp _p ^{\mu} s_0\cdot (\widetilde{s}^{\,\nu}-s_0^{\nu})\cdot (\widetilde{s}^{\,\alpha}-s_0^{\alpha}).
\end{equation}
In this case, the coordinates of $\widetilde{s}$ can be solved from the system of quadratic equations \eqref{eq:qsi2}, obtaining an expression equivalent to \eqref{eq:v1s} but involving a second order approximation. If there is more than one solution for $\widetilde{s}$, we have to take the solution such that $\exp _p \widetilde{s}$ is ``closest'' to $\widetilde{q}_{\mathrm{s}}$, e.g. that minimizes $\sum _{\mu=1}^4 |\exp _p^{\mu}{\widetilde{s}}-\widetilde{q}_{\mathrm{s}}^{\, \mu }|$. However, in practice, usually it would suffice to take the closest (in a coordinate sense) solution to $s_0$.
\end{remark}

Let $\psi _1$ be the geodesic with initial direction $s_1$. Probably, this new geodesic $\psi _1$ does not intersect the test particle $\beta '$ because we have made estimations, but it should be closer to $\beta '$ than the initial geodesic $\psi _0$ (see Remark \ref{rem:conv} for a discussion about convergence). So, applying this method again to the new geodesic, we get other values for $q_{\mathrm{geo}}$, $q_{\mathrm{part}}$, and $\widetilde{q}_{\mathrm{s}}$, from which we obtain $s_2$ and a geodesic $\psi _2$ (with initial direction $s_2$) that should be closer to $\beta '$ than $\psi _1$. We must repeat this process, obtaining a sequence of geodesics $\psi _n$ (with initial directions $s_n$, see Figure \ref{vdr-sp}), until we reach the desired accuracy or we exceed a certain number of iterations.

Summing up, given an observer's event $p=\beta (t)$ with $4$-velocity $u$, the algorithm for finding the relative position $s$ of the test particle $\beta '$ is as follows:
\begin{enumerate}
\item Choose an initial vector $s_0\in u^{\bot}$ such that the geodesic $\psi _0$ starting from $p$ with initial direction $s_0$ is sufficiently close to the test particle $\beta '$ (in the sense of Section \ref{sec:nearness}).

    Set $n=0$.
\item Find $q_{\mathrm{geo}}\in \psi _n$ and $q_{\mathrm{part}}\in \beta '$ with the same coordinate time that minimizes the distance $\Vert q_{\mathrm{part}}-q_{\mathrm{geo}}\Vert $ (see Remark \ref{rem3.1}).

    For $n\geq 1$ we must check that this distance is lesser than the corresponding distance computed for $n-1$; if it does not hold, then we should stop the algorithm because it could be a divergence symptom.
\item Find $\widetilde{q}_{\mathrm{s}}$ by means of a Newton-Raphson method: we find $\widehat{q}_{\mathrm{s}}$ by \eqref{eq:qsapprox}, and we redefine $q_{\mathrm{part}}$ as the event of $\beta '$ with the same coordinate time as $\widehat{q}_{\mathrm{s}}$; then, by \eqref{eq:qsapprox} again, we find another $\widehat{q}_{\mathrm{s}}$, repeating this process until $\widehat{q}_{\mathrm{s}}$ approximates $\widetilde{q}_{\mathrm{s}}$ with the desired accuracy. A quadratic convergence is assured by Proposition \ref{prop:conv1}.
\item Define $s_{n+1}^{\nu}:=s_n^{\nu}+ \left( J^{-1}\right)_{\mu}^{\nu}\cdot \left( \widetilde{q}_{\mathrm{s}}^{\, \mu}-q_{\mathrm{geo}}^{\mu}\right) $ (it is a linear approximation of $\widetilde{s}:=\exp _p^{-1}\widetilde{q}_{\mathrm{s}}$, see \eqref{eq:v1s}), where $s_n$ is re-scaled in order to hold $\exp _p s_n=q_{\mathrm{geo}}$. Alternatively, for a second order approximation, we can define $s_{n+1}$ by solving a system of quadratic equations (see Remark \ref{rem:soa}).
\item Redefine $s_{n+1}$ as its projection onto $u^{\bot}$: $s_{n+1}=s_{n+1}+g(u,s_{n+1})u$.
\item Set $n=n+1$. Repeat the process (steps 2, 3, 4, 5) with the geodesic $\psi _n$ starting from $p$ with initial direction $s_n$, until we reach the desired accuracy. Otherwise, we should stop the algorithm if we arrive at a predetermined maximum number of iterations.
\end{enumerate}

If the desired accuracy has been achieved, we can apply this algorithm again for another event $\beta(t+\Delta t)$ of the observer. Then, the new initial vector $s_0$ should be chosen as the vector with the same coordinates as the corresponding final vector $s_n$ computed for the previous event $\beta(t)$. In this case, choosing a sufficiently small time step $\Delta t$ assures convergence (see Remark \ref{rem:conv}), and ``differentiability'' of $S$ (although we obtain a discrete vector field) in the case of non-convex normal neighborhoods (see Section \ref{nonconvex}).

\begin{remark}
\label{rem:conv}
With respect to the convergence, in practice, this algorithm has been tested in many spacetimes with different test particles and observers, and it numerically converges in all of them with quadratic order. This accords with the fact that, if we compute $\widetilde{q}_{\mathrm{s}}$ exactly (in the 3rd step) and define $s_{n+1}:=\widetilde{s}$ (in the 4th step), the algorithm has a Newton-Raphson structure.

But, in general, it is not assured that $\psi _{n+1}$ is closer to $\beta '$ than $\psi _n$. If this property does not hold, then convergence is not guaranteed. Determining theoretical conditions for assuring convergence in a general case is a hard open problem. For this purpose, Proposition \ref{prop:conv1} is the first step.

Nevertheless, if the algorithm converges at the observer's event $p=\beta (t)$, then, applying differentiability arguments, it is assured that there exists a sufficiently small time step $\Delta t>0$ such that the algorithm also converges at $\beta (t+\Delta t)$ taking the new initial vector $s_0$ as the vector with the same coordinates as the corresponding final vector $s_n$ computed for the previous event $p$.
\end{remark}

\subsection{Lightlike simultaneity}
\label{sec:3.3}

\begin{figure}[tbp]
\begin{center}
\includegraphics[width=0.75\textwidth]{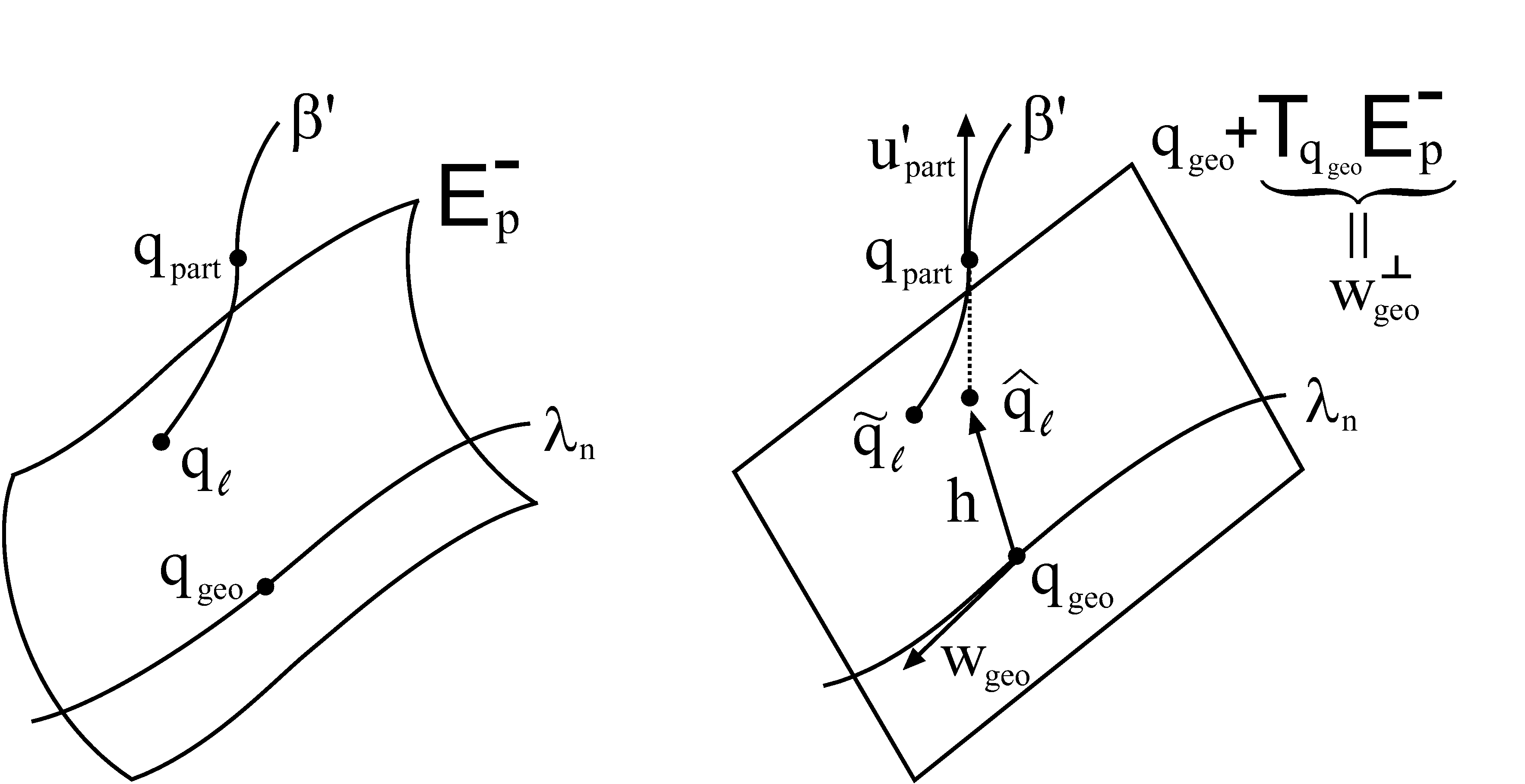}
\end{center}
\caption{Diagrams for $q_{\ell}$ (left) and the estimations of $q_{\ell}$ (right) when $q_{\mathrm{geo}}$ and $q_{\mathrm{part}}$ (that are events with the same coordinate time) are close. In this case, all the tangent spaces are identified by means of the coordinate system and provide an affine structure around $q_{\mathrm{geo}}$. Right: the past-pointing horismos submanifold $E^-_p$ is approximated by the affine hyperplane $q_{\mathrm{geo}}+T_{q_{\mathrm{geo}}}E^-_p$. Note that $T_{q_{\mathrm{geo}}}E^-_p=w_{\mathrm{geo}}^{\bot}$, where $w_{\mathrm{geo}}$ is the tangent vector of $\lambda _n$ at $q_{\mathrm{geo}}$.} \label{vdr-lg}
\end{figure}

Analogously, in the framework of lightlike simultaneity, the initial vector $w_0\in T_p\mathcal{M}$ must be lightlike and past-pointing, and the initial geodesic is named $\lambda _0$. Then, we estimate the intersection of the test particle with the affine hyperplane $q_{\mathrm{geo}}+T_{q_{\mathrm{geo}}}E^-_p$ using an expression analogous to \eqref{eq:qsapprox}:
\begin{equation}
\label{eq:qlapprox}
\widehat{q}_{\ell}:= q_{\mathrm{geo}}+h,
\end{equation}
where $h$ is the projection of $q_{\mathrm{part}}-q_{\mathrm{geo}}$ onto $T_{q_{\mathrm{geo}}}E^-_p$ parallel to $u'_{\mathrm{part}}$, with $u'_{\mathrm{part}}$ the $4$-velocity of $\beta '$ at $q_{\mathrm{part}}$ (see Figure \ref{vdr-lg} with $n=0$). Since $E^-_p$ is lightlike at $q$ (see Remark \ref{rem:lghor} below), we have that $h$ is always well-defined.

\begin{remark}
\label{rem:lghor}
Given $q\in E^-_p$, in \cite[Proposition 3]{Bol02} it is proved that $T_q E^-_p=\left( \tau _{pq} \exp _p^{-1}q\right) ^{\bot}$ and hence, the past-pointing horismos submanifold is always lightlike; in fact, we can write $T_q E^-_p=w^{\bot}$ where $w$ is the tangent vector of the light ray from $q$ to $p$. So,
\begin{equation}
\label{eq:qlapprox2}
\widehat{q}_{\ell} = q_{\mathrm{part}}-\frac{g(w_{\mathrm{geo}},q_{\mathrm{part}}-q_{\mathrm{geo}})}{g(w_{\mathrm{geo}},u'_{\mathrm{part}})}u'_{\mathrm{part}},
\end{equation}
where $w_{\mathrm{geo}}$ is the tangent vector of $\lambda _0$ at $q_{\mathrm{geo}}$. Note that $g(w_{\mathrm{geo}},u'_{\mathrm{part}})\neq 0$ because $w_{\mathrm{geo}}$ is lightlike and $u'_{\mathrm{part}}$ is timelike.
\end{remark}

Next, we have to find $\widetilde{q}_{\ell}$ applying a Newton-Raphson method analogously to the spacelike case. It can be proved analogously to Proposition \ref{prop:conv1} that this part of the algorithm has a quadratic order of convergence, but we only need the hypothesis on the bounded acceleration because $E^-_p$ is always lightlike (see Remark \ref{rem:lghor}).

After this, we have to estimate the vector $\widetilde{w}:=\exp_p^{-1}\widetilde{q}_{\ell}$. For this purpose, we re-scale $w_0$ in order to verify $\exp_p w_0=q_{\mathrm{geo}}$ and, working in coordinates analogously to \eqref{eq:v1s}, we obtain
\begin{equation}
\label{eq:v1l}
w_1^{\nu}:=w_0^{\nu}+ \left( J^{-1}\right)_{\mu}^{\nu}\cdot \left( \widetilde{q}_{\ell}^{\, \mu}-q_{\mathrm{geo}}^{\mu}\right) \approx \widetilde{w} ^{\, \nu},
\end{equation}
where, in this case, $\left( J^{-1}\right)_{\mu}^{\nu}$ are the coefficients of the inverse matrix of $J_{\nu}^{\mu}:=\partial _{\nu} \exp _p ^{\mu} (w_0)$. A second order approximation can be deduced analogously to Remark \ref{rem:soa} replacing, in expression \eqref{eq:qsi2}, $\widetilde{q}_{\mathrm{s}}$, $\widetilde{s}$, and $s_0$ with $\widetilde{q}_{\ell}$, $\widetilde{w}$, and $w_0$ respectively.

Finally, since $w_1$ might not be exactly lightlike (as it happens with $\widetilde{w}$), we have to redefine $w_1$ projecting it onto $C^-_p$, obtaining in this way an estimation of the desired vector $w$. For example, we can redefine it as the past-pointing lightlike vector with the same spatial coordinates as the original $w_1$ (i.e. a projection parallel to $\frac{\partial}{\partial t}|_p$).

The steps of the algorithm in the framework of lightlike simultaneity are analogous to those exposed at the end of Section \ref{sec:3.2} in the framework of spacelike simultaneity, replacing $s_n$, $u^{\bot}$, $\psi _n$, $\widehat{q}_{\mathrm{s}}$, and $\widetilde{q}_{\mathrm{s}}$ with $w_n$, $C^-_p$, $\lambda _n$, $\widehat{q}_{\ell}$, and $\widetilde{q}_{\ell}$ respectively. In the 3rd step, we can compute $\widehat{q}_{\ell}$ by means of expression \eqref{eq:qlapprox2}. Moreover, in the 5th step, we have to project $w_{n+1}$ onto $C^-_p$ as it is explained in the above paragraph.

\subsection{Non-convex normal neighborhoods}
\label{nonconvex}

Working in a convex normal neighborhood, there is no problem in the determination of the events $q_{\mathrm{geo}}$ (in the geodesic $\psi _n$ or $\lambda _n$) and $q_{\mathrm{part}}$ (in the test particle $\beta '$) introduced in the beginning of Section \ref{sec:3.2}, because they globally minimizes the distance between events of the geodesic and the test particle in surfaces of constant coordinate time (see Remark \ref{rem3.1}). But if we work in a non-convex normal neighborhood, then there could be different (or none) possibilities of relative position of the test particle with respect to the same event $p$ of the observer and, in this case, each possibility of relative position drives to a different local minimum of this distance.
Hence, for determining a ``suitable'' pair $q_{\mathrm{geo}},q_{\mathrm{part}}$ we have to search the local minimum that corresponds to a ``suitable'' relative position, according to the previously computed relative positions.

Let us explain it in more detail: suppose that we have previously applied the algorithm in $p_0:=\beta(t)$ and we have obtained a relative position $s_{p_0}$. Then, applying the algorithm in $p_1:=\beta(t+\Delta t)$, we say that a relative position $s_{p_1}$ is \textit{suitable} (according to $s_{p_0}$) if $s_{p_0}$ and $s_{p_1}$ are vectors of a (differentiable) vector field $S$ of relative positions. If the time step $\Delta t$ is sufficiently small, then it is assured that there exists at least one suitable $s_{p_1}$. If there are several relative positions, a suitable one must be close (in a coordinate sense) to $s_{p_0}$, and so it can be chosen as a relative position that, for example, minimizes the coordinate distance $\sum _{\nu =0}^4 |s_{p_1}^{\nu}-s_{p_0}^{\nu}|$. If there are several suitable relative positions, then any of them are valid. In this case, the output can be controlled adding some desired restrictions. For example, in the case of an equatorial circular geodesic test particle and an equatorial stationary observer in Schwarzschild spacetime (where there is gravitational lensing, see Example \ref{example3}), it could be desirable that all the involved geodesics ($\psi _n$ and $\lambda _n$) were also equatorial; we can impose this condition and so the relative positions $S$ and $S_{\mathrm{obs}}$ have zero $\theta $-component.

In practice, a \textit{suitable} pair $q_{\mathrm{geo}},q_{\mathrm{part}}$ for $p_1$ satisfies that the sum of the coordinate distances between them and the corresponding pair for $p_0$ is small, given a sufficiently small time step $\Delta t$.

Taking all this into account, the algorithm presented in Section \ref{sec3} returns one possibility of a discretized version of a relative position vector field $S$ or $S_{\mathrm{obs}}$, provided that there exists a relative position in the first event of the observer in which we apply the algorithm and we use a sufficiently small time step in the observer. From this output we obtain one discretized version of the (differentiable) vector fields $V_{\mathrm{kin}}$, $V_{\mathrm{Fermi}}$, $V_{\mathrm{spec}}$, or $V_{\mathrm{ast}}$.

\section{Examples}
\label{sec:examples}

The algorithm has been tested in several spacetimes with different observers and test particles, obtaining very good results in computation time and accuracy. We present here the most representative examples in Schwarzschild and Kerr spacetimes, using a test particle with equatorial geodesic orbit and stationary observers.

To sum up, given an observer and a test particle, our objective is to find $s$ (spacelike simultaneity) or $w$ (lightlike simultaneity) at a given event $p$ of the observer (see Figure \ref{diagram2}). To do this, we need an initial vector $s_0$ or $w_0$ that is supposed to be close (in a coordinate sense) to $s$ or $w$ respectively; but first, we are going to show by means of Examples \ref{example1} and \ref{example2} the rate of convergence of this method using an initial vector not necessarily close to the objective vector. Moreover, we are going to apply the second order approximation proposed in Remark \ref{rem:soa} and compare with the usual linear method.

\begin{example}
\label{example1}
The Schwarzschild metric in spherical coordinates $\left\{ t,r,\theta ,\varphi \right\} $ is given by the line element
\begin{equation}
\label{eq:sch}
ds^2 = -\left( 1-\frac{2m}{r}\right) \mathrm{d}t^2+\left( 1-\frac{2m}{r}\right) ^{-1}\mathrm{d}r^2+r^2\left( \mathrm{d}\theta ^2+\sin ^2\theta \mathrm{d}\varphi ^2 \right) ,
\end{equation}
where the parameter $m$ is interpreted as the mass of the gravitating object, $r>2m$ is the radial coordinate, and $0<\theta <\pi$. From now on we are going to suppose that the coordinates hold these restrictions and $m=1$.
In the framework of this coordinate system, a \textit{stationary observer} is an observer with constant spatial coordinates.
Note that stationary observers are not geodesic, but they are useful in the description and interpretation of the Schwarzschild spacetime.

Given a stationary observer at $r_0=8$, $\theta _0=\pi /2$, $\varphi _0=0$, and a test particle with equatorial circular geodesic orbit with $r_1=4$, $\theta _1=\pi /2$, $\varphi _1=\pi/2$ at $t=0$, we are going to check the algorithm for computing $s$ and $w$ at $p=(0,8,\pi/2,0)$ using an initial spacelike direction $s_0=(0,0,0,1)$ (orthogonal to the $4$-velocity of the observer at $p$), and using an initial past-pointing lightlike direction $w_0=(-\frac{2}{\sqrt{3}}r_0,0,0,1)$, respectively. Note that these initial directions are not too close of the objective vectors $s$ and $w$, but despite this, the method works well. Moreover, applying the second order approximation of Remark \ref{rem:soa} gives better results, but it doubles the computation time because we have to solve a system of quadratic equations (anyway, the computation time is a few seconds). The results are shown in Tables \ref{tab1}, \ref{tab1b}, \ref{tab2}, \ref{tab2b} (until reaching a relative error of order $10^{-6}$ or less) and Figure \ref{orb_schw_4-8}.

\begin{figure}[tbp]
\begin{center}
\includegraphics[width=0.8\textwidth]{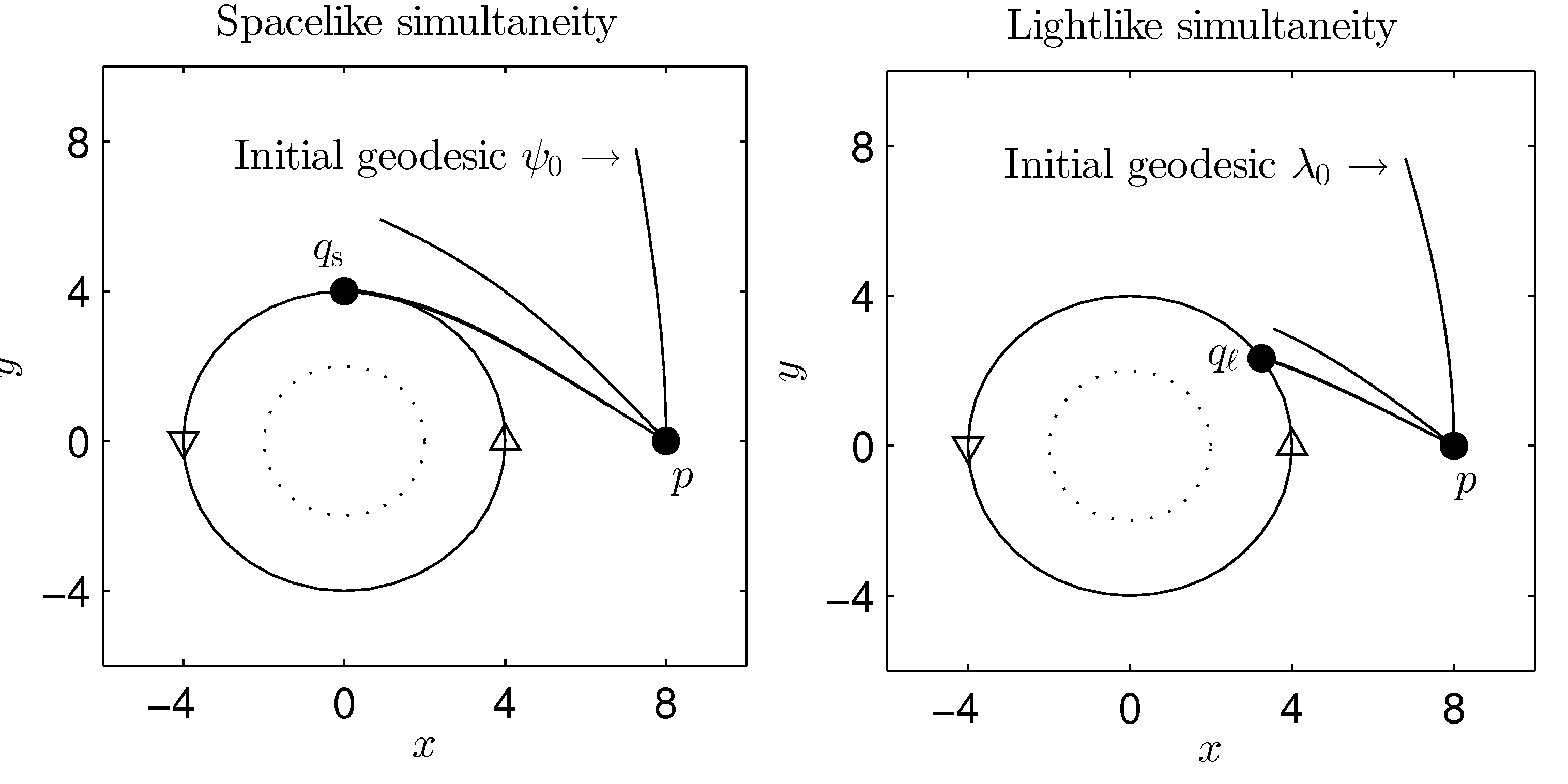}
\end{center}
\caption{Diagrams in the $xy$-plane of the elements involved in Example \ref{example1}. The stationary observer launches an initial geodesic from $p=(0,8,\pi/2,0)$ with initial direction $s_0=(0,0,0,1)$ (left) and $w_0=(-\frac{16}{\sqrt{3}},0,0,1)$ (right). It is shown how the successive iterations of the algorithm return geodesics that are getting closer to the desired intersection point with the test particle, $q_{\mathrm{s}}$ or $q_{\ell }$ (see Figure \ref{diagram2}).} \label{orb_schw_4-8}
\end{figure}

\begin{table}[htb]\footnotesize
\centering
\begin{tabular}{llll}
\hline
$n$ & $s_n$ & $\exp _p s_n$ & Rel. error \\
\hline
$0$ & $(0,0,0,1)$ & $(0,10.65556,\pi/2,0.821858)$ & $2.1$ \\
$1$ & $(0,-5.928809,0,0.835241)$ & $(0,5.981945,\pi/2,1.421563)$ & $5.9\cdot 10^{-1}$ \\
$2$ & $(0,-7.220423,0,0.604050)$ & $(0,4.050859,\pi/2,1.568339)$ & $1.4\cdot 10^{-2}$ \\
$3$ & $(0,-7.247942,0,0.597185)$ & $(0,4.000075,\pi/2,1.570793)$ & $2.1\cdot 10^{-5}$ \\
$4$ & $(0,-7.247982,0,0.597175)$ & $(0,3.9999998,\pi/2,1.570796)$ & $6.2\cdot 10^{-8}$ \\
\hline
\end{tabular}
\caption{Example \ref{example1}. Successive iterations of the algorithm for computing $s$. The event $\exp _p s_n$ approximates the intersection event $q_{\mathrm{s}}$. The relative error corresponds to the sum of the relative errors of each coordinate between $\exp _p s_n$ and $q_{\mathrm{s}}=(0,4,\pi/2,\pi/2)$ (see Figure \ref{orb_schw_4-8} left).}
\label{tab1}
\end{table}

\begin{table}[htb]\footnotesize
\centering
\begin{tabular}{llll}
\hline
$n$ & $s_n$ & $\exp _p s_n$ & Rel. error \\
\hline
$0$ & $(0,0,0,1)$ & $(0,10.65556,\pi/2,0.821858)$ & $2.1$ \\
$1$ & $(0, -6.940805,0,0.666507)$ & $(0,4.531059,\pi/2,1.538718)$ & $1.5\cdot 10^{-1}$ \\
$2$ & $(0,-7.247475,0,0.597303)$ & $(0,4.000944,\pi/2,1.570752)$ & $2.6\cdot 10^{-4}$ \\
$3$ & $(0,-7.247982,0,0.597175)$ & $(0,3.99999991,\pi/2,1.570796)$ & $2.9\cdot 10^{-8}$ \\
\hline
\end{tabular}
\caption{Example \ref{example1}. Analogous to Table \ref{tab1}, but using the second order approximation of Remark \ref{rem:soa}.}
\label{tab1b}
\end{table}

\begin{table}[htb]\footnotesize
\centering
\begin{tabular}{llll}
\hline
$n$ & $w_n$ & $\exp _p w_n$ & Rel. error \\
\hline
$0$ & $(-9.237604,0,0,1)$ & $(-8.991730,10.245645,\pi/2,0.844573)$ & $2.1$ \\
$1$ & $(-6.445192,-3.916650,0,0.408921)$ & $(-7.241158,4.720295,\pi/2,0.723016)$ & $3.9\cdot 10^{-1}$ \\
$2$ & $(-6.473183,-4.365314,0,0.306661)$ & $(-7.571348,4.020019,\pi/2,0.627855)$ & $1.4\cdot 10^{-2}$ \\
$3$ & $(-6.472659,-4.376997,0,0.303040)$ & $(-7.580813,4.000015,\pi/2,0.623213)$ & $1.4\cdot 10^{-5}$ \\
$4$ & $(-6.472658,-4.377011,0,0.303036)$ & $(-7.580825,3.999991,\pi/2,0.623207)$ & $2.2\cdot 10^{-6}$ \\
\hline
\end{tabular}
\caption{Example \ref{example1}. Successive iterations of the algorithm for computing $w$. The event $\exp _p w_n$ approximates the intersection event $q_{\ell }$. The relative error corresponds to the sum of the relative errors of each coordinate between $\exp _p w_n$ and $q_{\ell }$ (see Figure \ref{orb_schw_4-8} right). In this case, since the exact $t$ and $\varphi $ coordinates of $q_{\ell }$ are unknown, they have been assumed to be the $t$ and $\varphi$ coordinates of $\exp _p w_4$; the $r$ and $\theta $ coordinates of $q_{\ell }$ are $4$ and $\pi/2$ respectively.}
\label{tab2}
\end{table}

\begin{table}[htb]\footnotesize
\centering
\begin{tabular}{llll}
\hline
$n$ & $w_n$ & $\exp _p w_n$ & Rel. error \\
\hline
$0$ & $(-9.237604,0,0,1)$ & $(-8.991730,10.245645,\pi/2,0.844573)$ & $2.1$ \\
$1$ & $(-6.466102,-4.131156,0,0.366627)$ & $(-7.389968,4.398864,\pi/2,0.691871)$ & $2.3\cdot 10^{-1}$ \\
$2$ & $(-6.472884,-4.372292,0,0.304508)$ & $(-7.576998,4.008086,\pi/2,0.625106)$ & $5.6\cdot 10^{-3}$ \\
$3$ & $(-6.472658,-4.377008,0,0.303037)$ & $(-7.580822,3.999996,\pi/2,0.623209)$ & $1.0\cdot 10^{-6}$ \\
\hline
\end{tabular}
\caption{Example \ref{example1}. Analogous to Table \ref{tab2}, but using a second order approximation analogous to that of Remark \ref{rem:soa}.}
\label{tab2b}
\end{table}

\end{example}

\begin{example}
\label{example2}
The Kerr metric in Boyer-Lindquist coordinates $\left\{ t,r,\theta ,\varphi \right\} $ is given by the line element
\begin{equation}
\label{eq:kerr}
ds^2 = -\mathrm{d}t^2+\frac{\rho ^2}{\Delta }\mathrm{d}r^2+\rho ^2\mathrm{d}\theta ^2+\left( r^2+a^2\right) \sin ^2\theta \mathrm{d}\varphi ^2+\frac{2m}{\rho ^2}r\left(\mathrm{d}t-a\sin ^2\theta \mathrm{d}\varphi \right) ^2 ,
\end{equation}
where
\[
\Delta := r^2-2mr+a^2,\qquad \rho ^2 := r^2+a^2\cos ^2\theta .
\]
This metric describes the exterior gravitational field of a rotating mass $m$ with specific angular momentum $a=J/m$, where $J$ is the total angular momentum of the gravitational source (see \cite{Pug11} for restrictions on the coordinates). From now on we are going to suppose that $m=1$ and $a=1/2$.

Analogously to Example \ref{example1}, given a stationary observer at $r_0=8$, $\theta _0=\pi /2$, $\varphi _0=0$, and a test particle with equatorial circular geodesic orbit with $r_1=4$, $\theta _1=\pi /2$, $\varphi _1=\pi/2$ at $t=0$, we are going to check the algorithm for computing $s$ and $w$ at $p=(0,8,\pi/2,0)$ using an initial spacelike direction $s_0=(-1/3,0,0,1)$ (orthogonal to the $4$-velocity of the observer at $p$), and using an initial past-pointing lightlike direction $w_0=(-1/3-\sqrt{3091/6},0,0,1)$, respectively. The method works as well as in Example \ref{example1}; moreover, in the lightlike case, the linear estimation gives similar results to those applying the second order approximation proposed in Remark \ref{rem:soa}. The results are shown in Tables \ref{tab3}, \ref{tab3b}, \ref{tab4}, and \ref{tab4b} (until reaching a relative error of order $10^{-6}$ or less).

\begin{table}[htb]\footnotesize
\centering
\begin{tabular}{llll}
\hline
$n$ & $s_n$ & $\exp _p s_n$ & Rel. error \\
\hline
$0$ & $(-0.333333,0,0,1)$ & $(-0.244901,10.664776,\pi/2,0.822363)$ & $2.9$ \\
$1$ & $(-0.266769,-5.872871,0,0.800307)$ & $(-0.706191,5.752058,\pi/2,1.386445)$ & $8.4\cdot 10^{-1}$ \\
$2$ & $(-0.197485,-6.878255,0,0.592454)$ & $(-1.084179,4.026634,\pi/2,1.450633)$ & $1.5\cdot 10^{-2}$ \\
$3$ & $(-0.196280,-6.889558,0,0.588841)$ & $(-1.092323,4.000189,\pi/2,1.450052)$ & $1.0\cdot 10^{-4}$ \\
$4$ & $(-0.196272,-6.889639,0,0.588815)$ & $(-1.092382,3.9999995,\pi/2,1.450047)$ & $1.2\cdot 10^{-7}$ \\
\hline
\end{tabular}
\caption{Example \ref{example2}. Successive iterations of the algorithm for computing $s$. The event $\exp _p s_n$ approximates the intersection event $q_{\mathrm{s}}$. The relative error corresponds to the sum of the relative errors of each coordinate between $\exp _p s_n$ and $q_{\mathrm{s}}=(0,4,\pi/2,\pi/2)$ (see Figure \ref{orb_schw_4-8} left). In this case, since the exact $t$ and $\varphi $ coordinates of $q_{\mathrm{s}}$ are unknown, they have been assumed to be the $t$ and $\varphi$ coordinates of $\exp _p s_4$; the $r$ and $\theta $ coordinates of $q_{\mathrm{s}}$ are $4$ and $\pi/2$ respectively.}
\label{tab3}
\end{table}

\begin{table}[htb]\footnotesize
\centering
\begin{tabular}{llll}
\hline
$n$ & $s_n$ & $\exp _p s_n$ & Rel. error \\
\hline
$0$ & $(-0.333333,0,0,1)$ & $(-0.244901,10.664776,\pi/2,0.822363)$ & $2.9$ \\
$1$ & $(-0.215246,-6.683244,0,0.645738)$ & $(-0.968833,4.433499,\pi/2,1.449471)$ & $2.2\cdot 10^{-1}$ \\
$2$ & $(-0.196243,-6.889900,0,0.588730)$ & $(-1.092573,3.999382,\pi/2,1.450033)$ & $3.4\cdot 10^{-4}$ \\
$3$ & $(-0.196272,-6.889638,0,0.588815)$ & $(-1.092382,3.99999995,\pi/2,1.450047)$ & $1.1\cdot 10^{-8}$ \\
\hline
\end{tabular}
\caption{Example \ref{example2}. Analogous to Table \ref{tab3}, but using the second order approximation of Remark \ref{rem:soa}.}
\label{tab3b}
\end{table}

\begin{table}[htb]\footnotesize
\centering
\begin{tabular}{llll}
\hline
$n$ & $w_n$ & $\exp _p w_n$ & Rel. error \\
\hline
$0$ & $(-9.599460,0,0,1)$ & $(-9.288415,10.147354,\pi/2,0.854357)$ & $2.0$ \\
$1$ & $(-6.805715,-4.000178,0,0.432700)$ & $(-7.945461,4.553253,\pi/2,0.747437)$ & $2.9\cdot 10^{-1}$ \\
$2$ & $(-6.764920,-4.337490,0,0.356268)$ & $(-8.176669,4.005345,\pi/2,0.667962)$ & $3.4\cdot 10^{-3}$ \\
$3$ & $(-6.764068,-4.340567,0,0.355373)$ & $(-8.178553,3.999982,\pi/2,0.666767)$ & $4.4\cdot 10^{-6}$ \\
\hline
\end{tabular}
\caption{Example \ref{example2}. Successive iterations of the algorithm for computing $w$. The event $\exp _p w_n$ approximates the intersection event $q_{\ell }$. The relative error corresponds to the sum of the relative errors of each coordinate between $\exp _p w_n$ and $q_{\ell }$ (see Figure \ref{orb_schw_4-8} right). In this case, since the exact $t$ and $\varphi $ coordinates of $q_{\ell }$ are unknown, they have been assumed to be the $t$ and $\varphi $ coordinates of $\exp _p w_3$; the $r$ and $\theta $ coordinates of $q_{\ell }$ are $4$ and $\pi/2$ respectively.}
\label{tab4}
\end{table}

\begin{table}[htb]\footnotesize
\centering
\begin{tabular}{llll}
\hline
$n$ & $w_n$ & $\exp _p w_n$ & Rel. error \\
\hline
$0$ & $(-9.599460,0,0,1)$ & $(-9.288415,10.147354,\pi/2,0.854357)$ & $2.0$ \\
$1$ & $(-6.747007,-4.393588,0,0.339049)$ & $(-8.209192,3.906149,\pi/2,0.643900)$ & $6.1\cdot 10^{-2}$ \\
$2$ & $(-6.764157,-4.340248,0,0.355466)$ & $(-8.178359,4.000538,\pi/2,0.666892)$ & $3.4\cdot 10^{-4}$ \\
$3$ & $(-6.764068,-4.340567,0,0.355373)$ & $(-8.178554,3.999982,\pi/2,0.666767)$ & $4.4\cdot 10^{-6}$ \\
\hline
\end{tabular}
\caption{Example \ref{example2}. Analogous to Table \ref{tab4}, but using a second order approximation analogous to that of Remark \ref{rem:soa}. In this case, the result for $n=3$ is very similar to the case of the standard algorithm (see Table \ref{tab4}).}
\label{tab4b}
\end{table}

\end{example}

Next, we are going to give some examples in Schwarzschild and Kerr spacetimes about computing the relative velocities along an observer. In the figures, we make ``retarded comparisons'' (see \cite{BK12}) of their moduli, i.e. we compare velocities that are measured at the same event of the test particle.
This is important if we want to make a fair comparison of velocities in the framework of spacelike simultaneity with velocities in the framework of lightlike simultaneity, because the event of the test particle at which the velocity is measured depends on the chosen simultaneity: $q_{\mathrm{s}}$ (kinematic and Fermi) or $q_{\ell}$ (spectroscopic and astrometric), see Figure \ref{diagram2}. So, we plot the modulus of a relative velocity as function of $t_{\mathrm{s}}$ (kinematic and Fermi) or $t_{\ell}$ (spectroscopic and astrometric), that are the time coordinates of $q_{\mathrm{s}}$ and $q_{\ell}$ respectively. All the numerical data has been computed with a relative error less than $10^{-6}$.

\begin{example}
\label{example3}
In the Schwarzschild metric \eqref{eq:sch}, let us consider the stationary observer and the test particle of Example \ref{example1}, but now we are going to suppose that it has $\varphi _1=0$ at $t=0$ (i.e. the observer, the test particle and the singularity $r=0$ are aligned at $t=0$). This problem has not been previously studied analytically due to its complexity.

Note that in this case, we do not work in a convex normal neighborhood and so there is not a unique kinematic, Fermi, spectroscopic and astrometric relative velocities, depending on different geodesics joining the observer and the test particle (i.e. different choices of $\psi $ in the spacelike case, or $\lambda $ in the lightlike case, see Figure \ref{diagram2}). For example, if the test particle is at spatial coordinates $r_1=4$, $\theta _1=\pi/2$, $\varphi _1=0$, the observer and it can be joined by geodesics, $\psi $ or $\lambda $, giving whole turns around the black hole or not. In Figure \ref{schw_4-8} there are represented the moduli of the corresponding relative velocities in the case of equatorial geodesics joining the observer and the test particle following the convention that $\varphi _1$ also indicates the number of turns (and their direction) around the black hole: for example, for $\varphi _1=0$, the geodesic goes directly from the observer to the test particle; on the other hand, for $\varphi _1=2\pi $ the geodesic gives an equatorial whole turn counter-clockwise around the black hole before arriving at the test particle (see Figure \ref{fig4b}).

\begin{figure}[tbp]
\begin{center}
\includegraphics[width=0.9\textwidth]{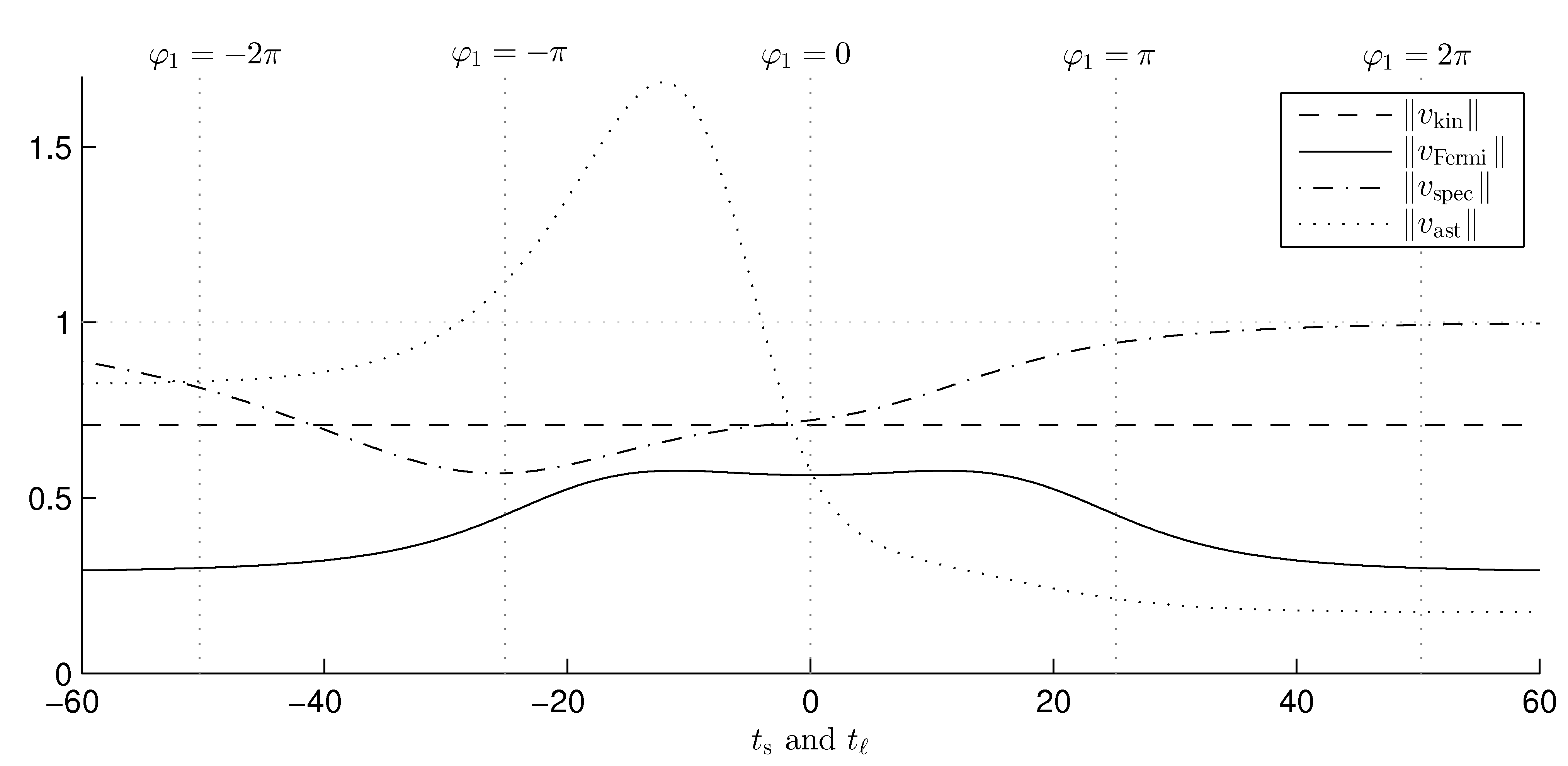}
\end{center}
\caption{Retarded comparison of the moduli of the kinematic, Fermi, spectroscopic and astrometric relative velocities of a test particle with equatorial circular geodesic orbit with radius $r_1=4$, $\theta _1=\pi /2$ and $\varphi _1=0$ at $t=0$, with respect to a stationary observer at $r_0=8$, $\theta _0=\pi /2$ and $\varphi _0=0$, in the Schwarzschild metric with $m=1$. Since it is a retarded comparison, we take as abscissa $t_{\mathrm{s}}$ (for kinematic and Fermi) and $t_{\ell}$ (for spectroscopic and astrometric), that are the time coordinates of the events $q_{\mathrm{s}}$ and $q_{\ell}$ respectively (i.e. the events of the test particle at which the corresponding velocities are measured, see Figure \ref{diagram2}). The vertical lines correspond to different values of $\varphi _1$ (the $\varphi $-coordinate of $q_{\mathrm{s}}$ or $q_{\ell}$), and the geodesics joining the observer and the test particle have been restricted to be equatorial.}
\label{schw_4-8}
\end{figure}

\begin{figure}[tbp]
\begin{center}
\includegraphics[width=1\textwidth]{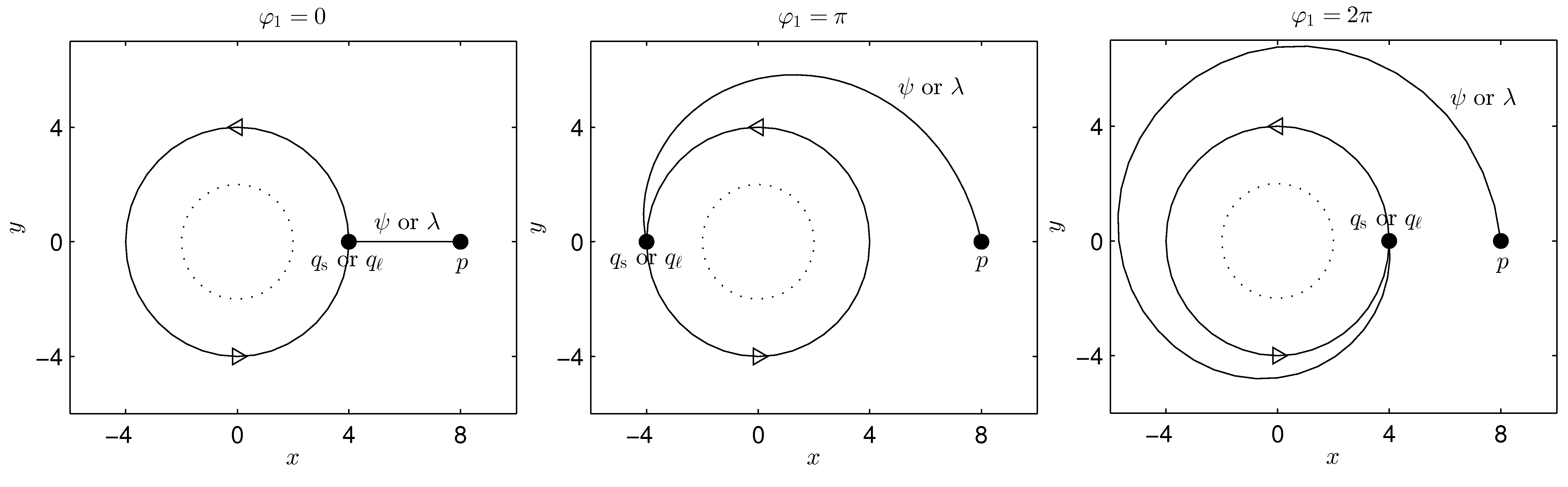}
\end{center}
\caption{Diagrams in the $xy$-plane of different equatorial geodesics joining the test particle and the observer at $p$ in Example \ref{example3}. The parameter $\varphi _1$ represents the $\varphi $-coordinate of $q_{\mathrm{s}}$ or $q_{\ell}$, and indicates the number of turns of the geodesic around the black hole.}
\label{fig4b}
\end{figure}

At $t=0$, we have taken initial vectors $s_0=(0,-1,0,0)$ and $w_0=(-4/3,-1,0,0)$ (for being a past-pointing lightlike vector at $p=(0,8,\pi /2,0)$), but this is not important because in this case the algorithm converges quickly also for non-nearby initial vectors and so it is not necessary to apply the second order approximation of Remark \ref{rem:soa}. The computations have been done using a time step (in the observer) of $\Delta t=0.25$, and the number of iterations needed at each time (for reaching the desired relative error $10^{-6}$) is at most $n=3$. So, the linear algorithm works very well in this case.

Considering stationary observers with different radial coordinate $r_0=4$ and $r_0=3$ (see Figures \ref{schw_4-4} and \ref{schw_4-3} respectively), we observe that $\|v_{\mathrm{kin}}\|$ remains constant and equal to $\sqrt{1/2}$. This numerical result has motivated a work \cite{Bol12b} where this property is theoretically proved in general: in the Schwarzschild metric, the modulus of the kinematic relative velocity of a test particle with circular geodesic orbit at radius $r_1>3m$ with respect to any stationary observer is constant and equal to $\sqrt{\frac{m}{r_1-2m}}$.

\begin{figure}[tbp]
\begin{center}
\includegraphics[width=0.9\textwidth]{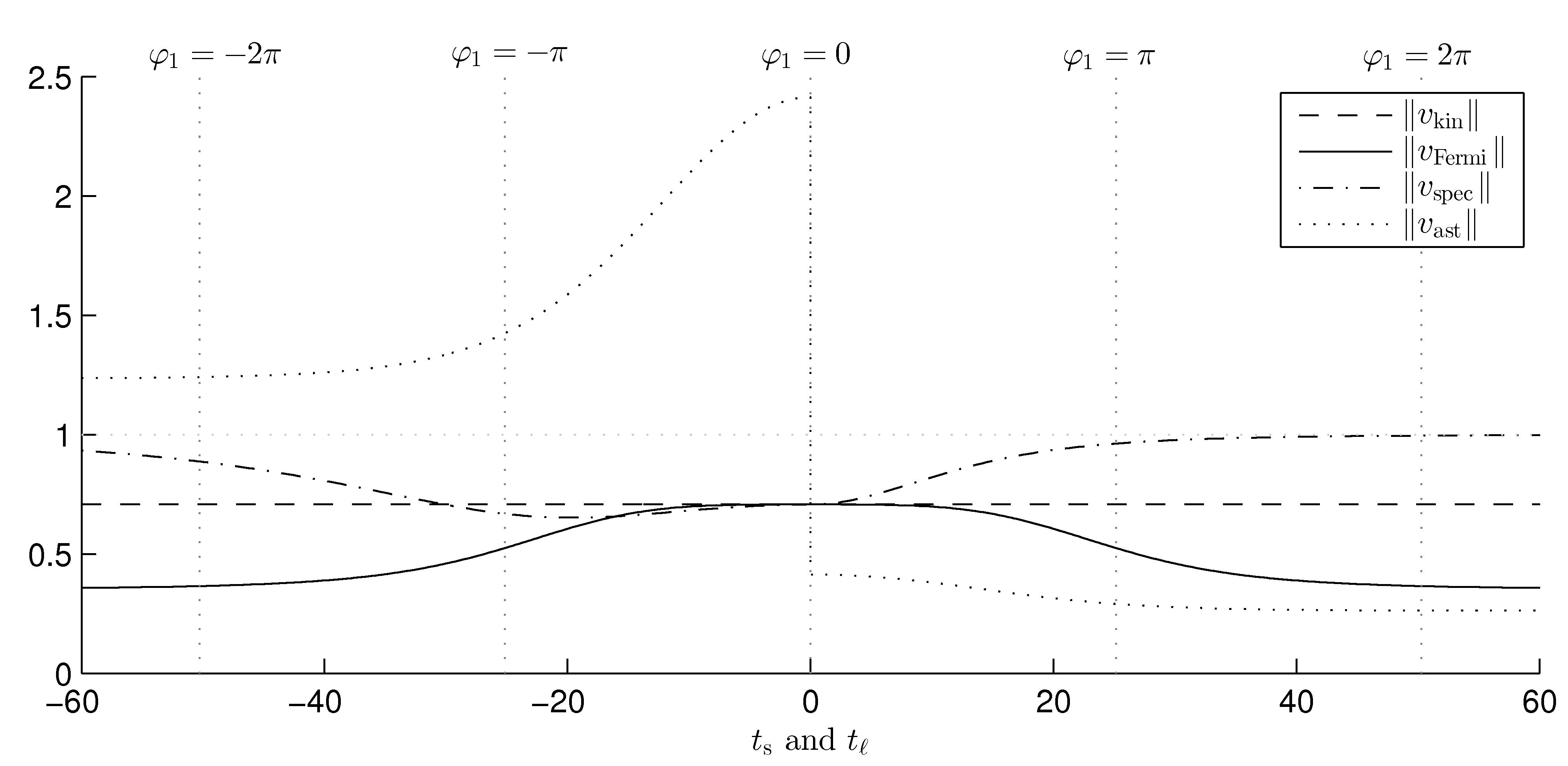}
\end{center}
\caption{Analogous to Figure \ref{schw_4-8}, but taking $r_0=4$.}
\label{schw_4-4}
\end{figure}

\begin{figure}[tbp]
\begin{center}
\includegraphics[width=0.9\textwidth]{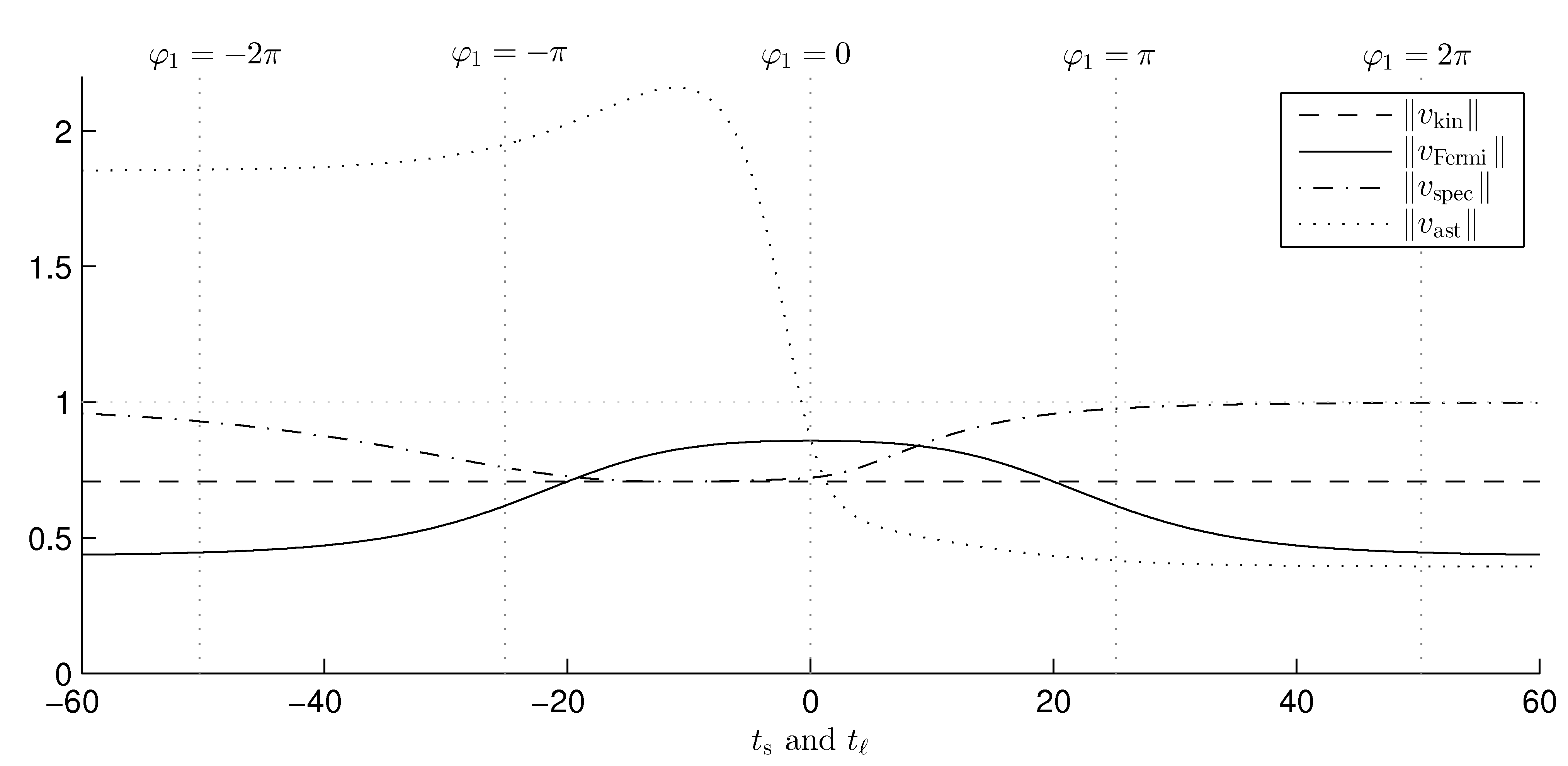}
\end{center}
\caption{Analogous to Figure \ref{schw_4-8}, but taking $r_0=3$.}
\label{schw_4-3}
\end{figure}

Moreover, it can be checked that $\|v_{\mathrm{spec}}\|$ tends to $1$ when $t_{\ell}\rightarrow \pm \infty $, and using expression \eqref{dopplergen}, we can compute the corresponding frequency shift, as it is seen in Figure \ref{shift_schw_4-3-4-8}.

\begin{figure}[tbp]
\begin{center}
\includegraphics[width=0.65\textwidth]{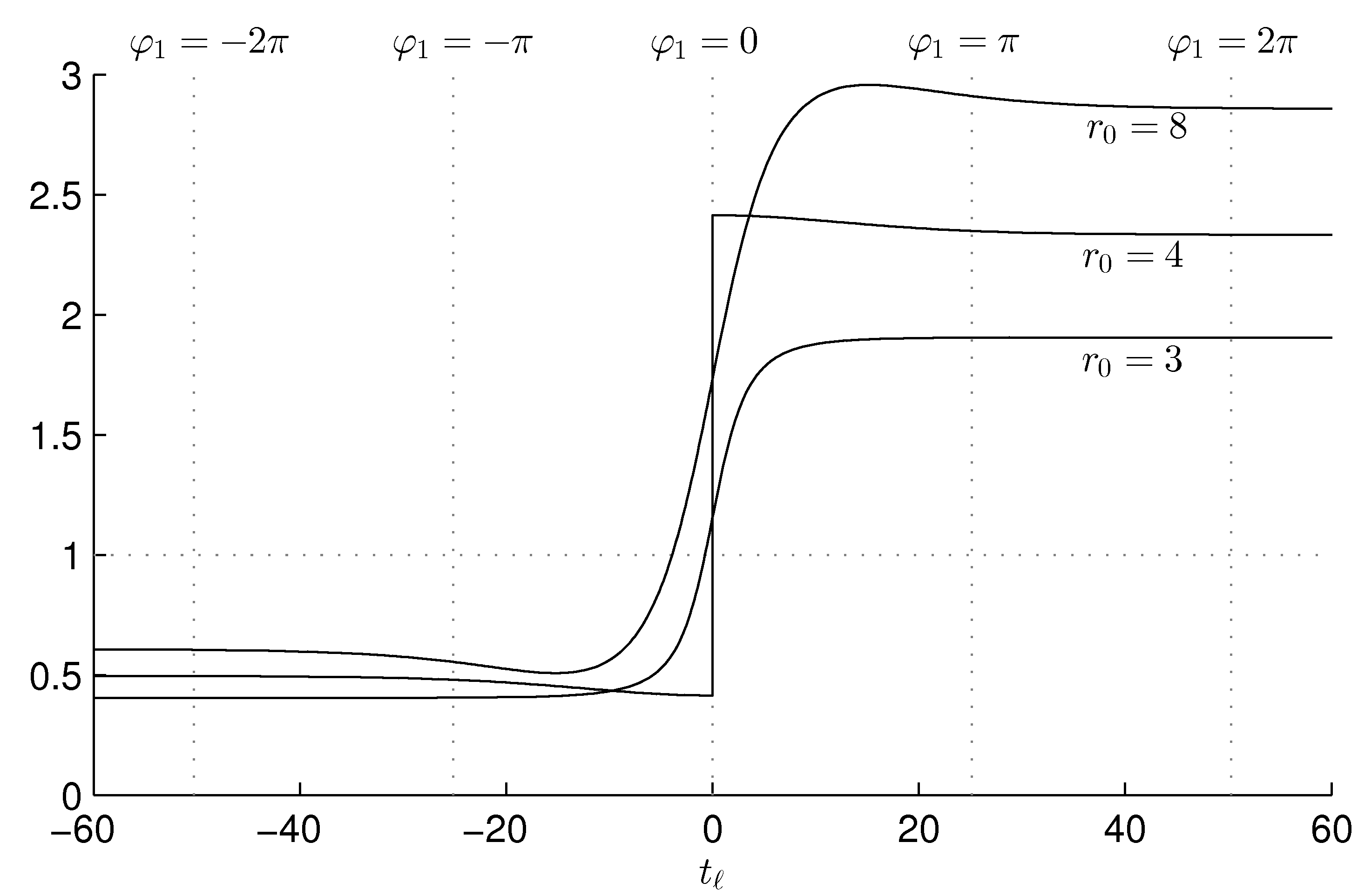}
\end{center}
\caption{Frequency shifts corresponding with the spectroscopic relative velocities computed in Example \ref{example3} (see Figures \ref{schw_4-8}, \ref{schw_4-4} and \ref{schw_4-3}).} \label{shift_schw_4-3-4-8}
\end{figure}

\end{example}

\begin{example}
\label{example4}
In the Kerr metric \eqref{eq:kerr}, let us consider the stationary observer and the test particle of Example \ref{example2}, but supposing that it has $\varphi _1=0$ at $t=0$, as in Example \ref{example3}. Analogously to this example, in Figure \ref{kerr_4-8} there are represented the moduli of the corresponding relative velocities. Comparing with the analogous problem in the Schwarzschild spacetime studied in Example \ref{example3} and Figure \ref{schw_4-8}, it can be observed that $\|v_{\mathrm{kin}}\|$ does not remain constant. Moreover, it also draws attention to the fact that $\|v_{\mathrm{spec}}\|$ does not tend to $1$ when $t_{\ell}\rightarrow -\infty$; in fact, it can be checked that it is decreasing and tends to a value $\approx 0.072$.

\begin{figure}[tbp]
\begin{center}
\includegraphics[width=0.9\textwidth]{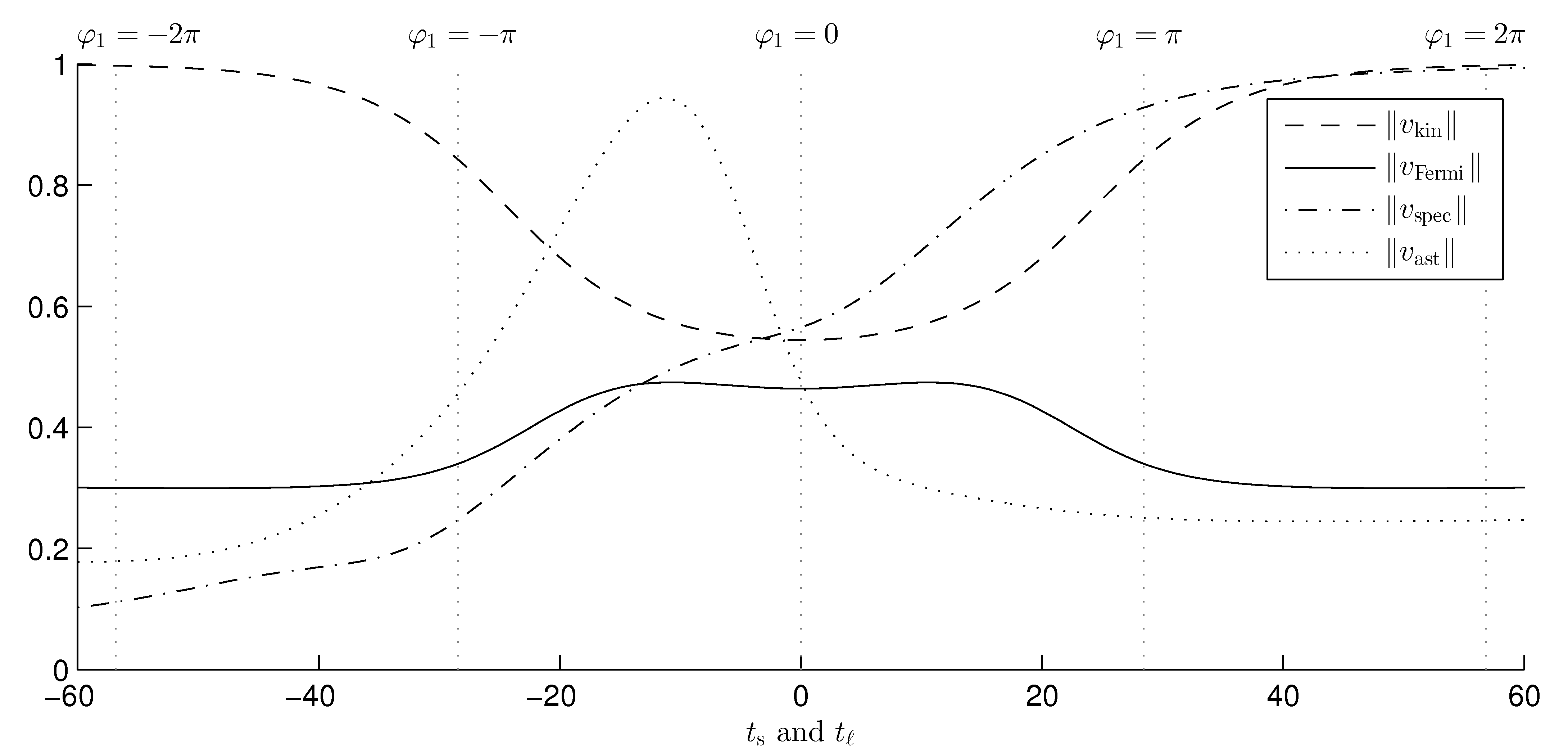}
\end{center}
\caption{Retarded comparison of the moduli of the kinematic, Fermi, spectroscopic and astrometric relative velocities of a test particle with equatorial circular geodesic orbit with radius $r_1=4$, $\theta _1=\pi /2$ and $\varphi _1=0$ at $t=0$, with respect to a stationary observer at $r_0=8$, $\theta _0=\pi /2$, $\varphi _0=0$, in the Kerr metric with $m=1$, $a=0.5$. The vertical lines correspond to different values of $\varphi _1$ (the $\varphi $-coordinate of the test particle at $q_{\mathrm{s}}$ or $q_{\ell}$), and the geodesics joining the observer and the test particle have been restricted to be equatorial.} \label{kerr_4-8}
\end{figure}

Finally, in Figure \ref{shift_schw-kerr} it is shown the frequency shift of the test particle with respect to the observer, compared with the frequency shift of the analogous problem in the Schwarzschild spacetime. Of note is the fact that, in the Kerr spacetime, the shift is greater than $1$ for a sufficiently negative $\varphi _1$; in fact, it tends to $1.053$ approximately when $\varphi _1\rightarrow -\infty $. Hence, in this case, an approaching\footnote{Taking into account the \textit{affine distance}, also known as \textit{lightlike distance}, defined as $\|s_{\mathrm{obs}}\|$ (see \cite{Bol05}).} test particle has redshift instead of blueshift.

\begin{figure}[tbp]
\begin{center}
\includegraphics[width=0.8\textwidth]{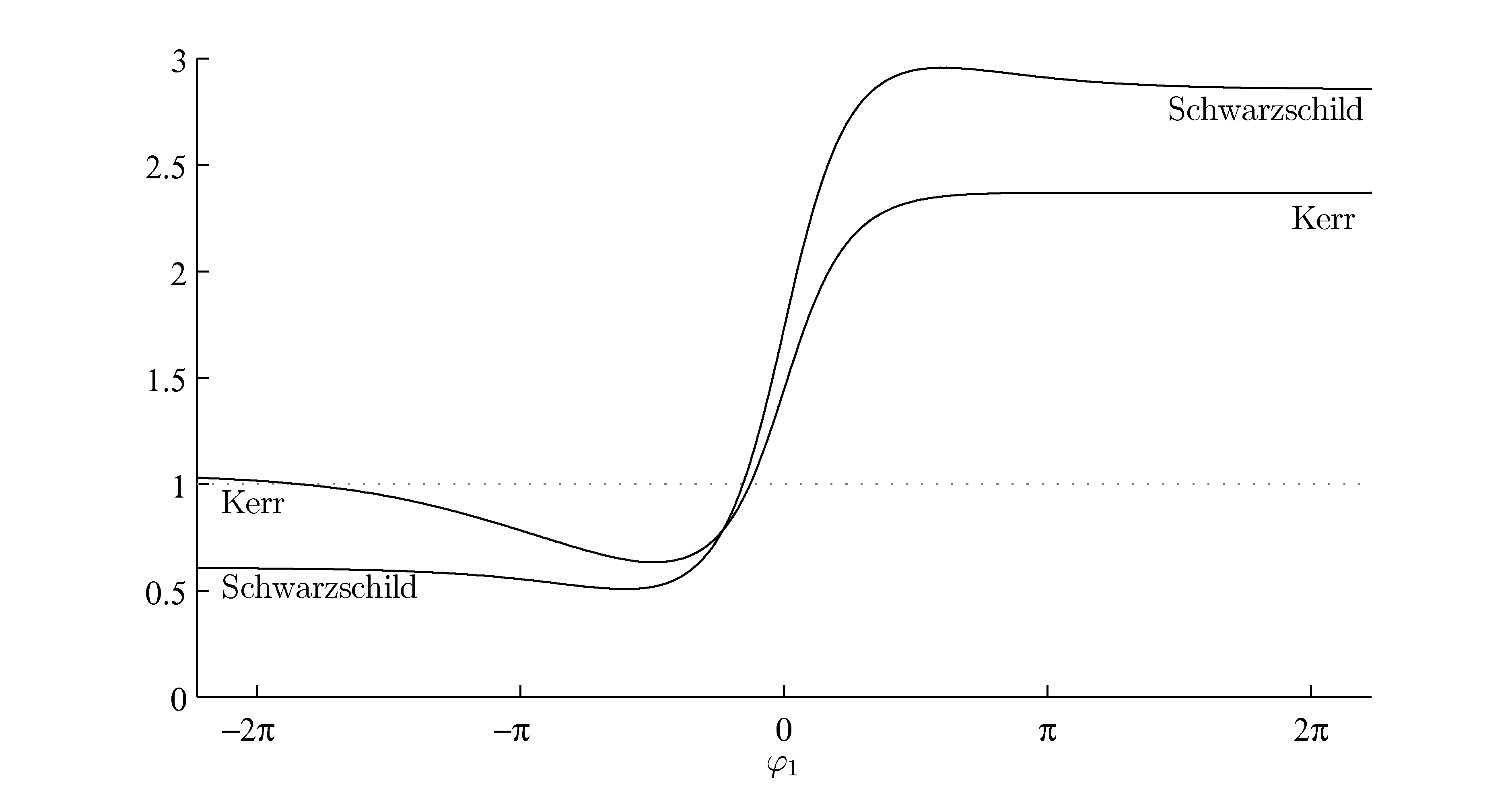}
\end{center}
\caption{Frequency shifts corresponding with the spectroscopic relative velocities computed in Example \ref{example4} (see Figure \ref{kerr_4-8}) and in the analogous problem in the Schwarzschild spacetime (see Figure \ref{schw_4-8} and Figure \ref{shift_schw_4-3-4-8} with $r_0=8$). The frequency shifts are plotted as functions of $\varphi _1$ (the $\varphi $-coordinate of the test particle at $q_{\ell}$) in order to make a fair comparison.}
\label{shift_schw-kerr}
\end{figure}

\end{example}

\section{Final remarks}

First, we have generalized the concepts of the relative velocities introduced in \cite{Bol07} to non-convex normal neighborhoods (see Section \ref{sec2.1}), focusing on what is minimally necessary to make sense of the corresponding definitions. As a result, we can now apply this theory of relative velocities to a wide range of scenarios, including those with gravitational lensing or caustics.

Then, we have developed an algorithm for computing relative velocities of a test particle with respect to an observer, based on finding the Fermi and observational coordinates of the test particle; hence, this method can be applied only for this purpose, allowing the fast computation of Fermi and observational coordinates with high accuracy.

With respect to Fermi coordinates, the objective of the paper \cite{KC08} is also to find the Fermi coordinates of an event in a general spacetime, calculating the general transformation formulas from arbitrary coordinates to Fermi coordinates, and vice versa. But the methods are not the same:
\begin{itemize}
\item In \cite{KC08}, the Fermi coordinates are given in the form of Taylor expansions. Hence, for increasing accuracy, we have to add terms of higher order, and it could be very complex to achieve a high accuracy for distant events.
\item In this work, we use an iterative algorithm. So, for increasing accuracy, we have to make more iterations (provided there is convergence, see Remark \ref{rem:conv}), and this does not imply additional difficulty.
\end{itemize}
Moreover, the algorithm proposed in this work is also valid in non-convex normal neighborhoods where there is not uniqueness of Fermi coordinates (see Section \ref{nonconvex}): depending on the initial vector $s_0$ you can get different relative positions $s$ of the same test particle at a given observer's time. Nevertheless, the method given in \cite{KC08} is a powerful tool in the theoretical study of Fermi coordinates, while the algorithm introduced here is designed to be implemented on computers. So, they are complementary and, for example, we can use the Taylor expansions given in \cite{KC08} for choosing the initial vector $s_0$ in the first execution. Moreover, if the algorithm does not converge (see Remark \ref{rem:conv}), then the Taylor expansions are a valuable alternative.

With respect to observational coordinates, they are used for the study of gravitational lensing or frequency shifts (e.g. the Pioneer anomaly), and they describe how we observe the universe. In this field, the numerical relativity is becoming more and more important due to the existence of complex models, such as dark matter models or multiple star systems.

\appendix

\section{Alternative computation of Fermi and astrometric relative velocities}
\label{sec:alter}

In \cite[Proposition 3.3]{Bol12} it is proved that, working in a convex normal neighborhood,  $v_{\mathrm{Fermi}}$ and $v_{\mathrm{ast}}$ can be computed in terms of $p$, $q_{\mathrm{s}}$, $q_{\ell }$, $u$, $\left( \nabla _U U\right) _p$, $u'_{\mathrm{s}}$, $u'_{\ell}$, $s$, $s_{\mathrm{obs}}$, and hence we do not need to know $S$ or $S_{\mathrm{obs}}$ around $p$, contrary to what is expected from \eqref{vfermi} and \eqref{vast}. This computation is done using a coordinate system $\left( x^0,x^1,x^2,x^3\right) $, obtaining
\begin{equation}
\label{vfermi2}
v_{\mathrm{Fermi}} = a_1 + \dot{\tau}'a_2 + a_3 - g\left( s,\left( \nabla _U U\right) _p\right) u,
\end{equation}
where the coordinates of vectors $a_1$, $a_2$, $a_3$ are given by
\[
a_1^{\mu} := \partial _{\nu} f_{\mathrm{s}}^{\mu} (p) u^{\nu}\qquad ;\qquad
a_2^{\mu} := \partial _{\nu} \log _p^{\mu} (q_{\mathrm{s}}) {u'_{\mathrm{s}}}^{\nu}\qquad ;\qquad
a_3^{\mu} := \Gamma ^{\mu}_{\nu \alpha}(p) u^{\nu} s^{\alpha},
\]
with
\begin{equation}
\label{fs}
f_{\mathrm{s}} := \log \left( \underline{\,\,\,\,} ,q_{\mathrm{s}}\right),
\end{equation}
where $\log (p,q)$ denotes $\log _p q$ (or equivalently $\exp_p^{-1}q$), and
\[
\dot{\tau}' = -\dfrac{g\left( s,\left( \nabla _U U\right) _p\right) + g\left( a_1+a_3,u\right) }{g\left( a_2,u\right) }.
\]
On the other hand
\begin{eqnarray}
\nonumber v_{\mathrm{ast}} & = & a_4 + \dot{\tau}'a_5 + \left( g\left( a_4 + \dot{\tau}'a_5,u\right) +g\left( \log _p q_{\ell}, \dot{u}\right) \right) u \\
\label{vast2} & & + g\left( \log _p q_{\ell},u\right) \dot{u} + a_6 - g\left( s_{\mathrm{obs}},\left( \nabla _U U\right) _p\right) u,
\end{eqnarray}
where the coordinates of vectors $a_4$, $a_5$, $a_6$ are given by
\[
a_4^{\mu} := \partial _{\nu} f_{\ell }^{\mu} (p) u^{\nu}\qquad ;\qquad
a_5^{\mu} := \partial _{\nu} \log _p^{\mu} (q_{\ell}) {u'_{\ell}}^{\nu}\qquad ;\qquad
a_6^{\mu} := \Gamma ^{\mu}_{\nu \alpha}(p) u^{\nu} s_{\mathrm{obs}}^{\alpha},
\]
with
\begin{equation}
\label{fl}
f_{\ell } := \log \left( \underline{\,\,\,\,} ,q_{{\ell}}\right) ,
\end{equation}
and
\[
\dot{\tau}' = \dfrac{g\left( s_{\mathrm{obs}},\left( \nabla _U U\right) _p\right) + g\left( a_4+a_6,u\right) +g\left( \log _p q_{\ell},g\left( \dot{u},u\right) u-\dot{u}\right) }{g\left( a_4,u\right) }.
\]
Note that $\dot{u}$ is given in terms of $p$, $u$ and $\left( \nabla _U U\right) _p$:
\[
\dot{u} = \left( \nabla _U U\right) _p-\Gamma ^{\mu}_{\nu \alpha}(p)u^{\nu}u^{\alpha}\left. \frac{\partial }{\partial x^{\mu}}\right| _p.
\]
Expressions \eqref{vfermi2} and \eqref{vast2} of $v_{\mathrm{Fermi}}$ and $v_{\mathrm{ast}}$ are not explicitly shown in \cite{Bol12}, but they can be deduced from the proof of Proposition 3.3.

Nevertheless, numerically it is difficult to compute the vectors $a_1$ and $a_4$ with high accuracy, concretely the derivatives of $f_{\mathrm{s}}$ \eqref{fs} and $f_{\ell}$ \eqref{fl}. For example, if we want to compute $\partial _{\mu}f_{\mathrm{s}}$, we are supposed to know $q_{\mathrm{s}}$ and the vector $\log _p q_{\mathrm{s}}$, i.e. the relative position $s$, that can be estimated by means of the algorithm exposed in Section \ref{sec:3.2}. Then, given a small $\epsilon >0$, we launch a geodesic from the event $p'$ with coordinates $p^{\nu}+\epsilon \delta ^{\nu}_{\mu}$ (i.e. the same coordinates as $p$ but the $\mu$-th coordinate is $p^{\mu}+\epsilon $) and initial tangent vector $s$ (actually, the vector in $T_{p'}\mathcal{M}$ with the same coordinates as $s$). Since $\epsilon $ is small, this geodesic is ``close''\footnote{We need another concept of ``nearness'' different from the one introduced in Section \ref{sec:nearness}, because it is not assured that the new geodesic intersects the leaf of constant coordinate time of $q_{\mathrm{s}}$; for example, a ``nearness'' based on the sum of the spatial coordinate distance with the temporal coordinate distance.} to $q_{\mathrm{s}}$, and so there is an event $q_{\mathrm{geo}}$ of the geodesic ``close'' to $q_{\mathrm{s}}$. So, assuming an affine structure, we have to parallel transport the vector $q_{\mathrm{s}}-q_{\mathrm{geo}}$ from $q_{\mathrm{geo}}$ to $p'$ along the geodesic, and add this vector to the current initial vector $s$, obtaining a new initial vector whose corresponding geodesic will be ``closer'' to $q_{\mathrm{s}}$. Repeating this process, we can estimate the initial vector of the geodesic passing through $q_{\mathrm{s}}$ with the desired accuracy and then, we can evaluate $\partial _{\mu}f_{\mathrm{s}}$ comparing this initial vector with $s$.

Analogously, for computing the derivatives of $f_{\ell}$, we are supposed to know $q_{\ell}$ and the vector $\log _p q_{\ell}$, whose projection onto $u^{\bot }$ is the observed relative position $s_{\mathrm{obs}}$, that can be estimated by means of the algorithm exposed in Section \ref{sec:3.3}.

Concluding, this method let us find $v_{\mathrm{Fermi}}$ and $v_{\mathrm{ast}}$ computing $S$ and $S_{\mathrm{obs}}$ only at $p$, but the original method based on definitions \eqref{vfermi} and \eqref{vast} (in which $S$ and $S_{\mathrm{obs}}$ are computed around $p$) is obviously faster and more accurate because it requires a far fewer number of operations. Moreover, if we do not work in a convex normal neighborhood, expressions \eqref{vfermi2} and \eqref{vast2} are not strictly valid because the vectors $a_1$, $a_2$, $a_4$, and $a_5$ are not well-defined in general.

\end{document}